\documentclass[a4paper]{article}

\usepackage{fullpage}

\usepackage{color}
\usepackage{graphicx}
\usepackage{amsmath}
\usepackage{amssymb}
\usepackage{amsthm}
\usepackage[noend]{algpseudocode}
\usepackage{algorithm}
\usepackage{tikz}
\usepackage{enumitem}
\usepackage{booktabs}
\usepackage{tabularx}
\usepackage{array}
\usepackage{cleveref}

\setlist[itemize]{leftmargin=1.5em,itemsep=2pt,topsep=3pt}
\setlist[enumerate]{leftmargin=1.6em,itemsep=2pt,topsep=3pt}
\graphicspath{{./figures/}}

\algrenewcommand\algorithmicrequire{\textbf{Input:}}
\algrenewcommand\algorithmicensure{\textbf{Output:}}

\newtheorem{theorem}{Theorem}
\newtheorem{lemma}[theorem]{Lemma}

\newtheorem{corollary}[theorem]{Corollary}
\theoremstyle{definition}

\newtheorem{remark}[theorem]{Remark}

\newcommand{\eps}{\varepsilon}
\newcommand{\Hd}{\mathbb{H}^{D}}
\newcommand{\BD}{\mathbb{B}^{D}}
\newcommand{\R}{\mathbb{R}}
\newcommand{\OPT}{\mathrm{OPT}}
\newcommand{\cost}{\Phi}

\newcommand{\distH}{d_H}
\newcommand{\Ball}{B_H}
\newcommand{\argmin}{\mathop{\mathrm{arg\,min}}}

\newcolumntype{Y}{>{\raggedright\arraybackslash}X}

\DeclareMathOperator{\arcosh}{arcosh}
\DeclareMathOperator{\artanh}{artanh}

\title{Coresets for Continuous $k$-Center in Hyperbolic Space
\thanks{
This work was supported by National Research Foundation of Korea (NRF) grants funded by the Korea government (MSIT) (No. RS-2026-25471649 and No. RS-2024-00414849).
}}

\author{Eunku Park\\
Department of Liberal arts and Sciences\\
DGIST, Republic of Korea\\
\texttt{parkeun9@dgist.ac.kr}
}

\begin{document}
\maketitle

\begin{abstract}
We study the continuous $k$-center problem in fixed-dimensional hyperbolic space. The input is a finite point set $P \subset \mathbb H^D$, while the centers may be placed anywhere in the ambient space. A nonempty subset $P_\eps\subseteq P$ is an $\eps$-coreset if every exact optimal continuous $k$-center solution for $P_\eps$ is a $(1+\eps)$-approximate solution for $P$.

In Euclidean space, small coresets can be obtained from local grids, whose size is controlled by the polynomial volume growth of Euclidean balls. A direct extension of this approach to hyperbolic space does not yield a radius-independent bound because hyperbolic balls have exponential volume growth. We overcome this obstruction and construct a coreset whose size is independent of both the number of input points and the geometric scale of the instance.

Our construction begins with Gonzalez's farthest-first traversal, whose covering radius $R_G$ provides a constant-factor estimate of the optimum. When $R_G$ is bounded, we map each farthest-first cluster to the origin of the Poincar\'e ball model and discretize it using a local Euclidean grid. For larger values of $R_G$, rather than discretizing an ambient hyperbolic ball, we decompose each cluster into radial shells and project each shell onto a representative hyperbolic sphere. The exact hyperbolic law of cosines shows that the intersection of a hyperbolic ball with such a sphere is represented by a spherical cap. We then develop a recursive $k$-cap coverage certificate that replaces each shell by a radius-independent family of input witnesses while preserving its farthest-point behavior against every set of at most $k$ centers. 

The resulting coreset $P_\eps$ has size $|P_\eps| = \left({k}/{\eps}\right)^{O(kD)}$ and satisfies $\Phi_P(C_{P_\eps}) \le (1+\eps)\operatorname{OPT}_k(P)$ for every optimal continuous $k$-center solution $C_{P_\eps}$ of $P_\eps$. It can be constructed in $O\left( nkD\left({k}/{\eps}\right)^{O(kD)} \right)$ time. Consequently, for fixed $D$, $k$, and $\eps$, the coreset has constant size and can be computed in linear time. We complement this upper bound by showing that, for every fixed $D \ge 1$, every $k \ge 2$, and all sufficiently small $\eps$, there are instances for which every $\eps$-coreset has size $\Omega\left(k\eps^{-(D-1)/2}\right)$.
\end{abstract}

\section{Introduction}

The $k$-center problem is a basic clustering problem. Given a finite set $P$ in a metric space, the goal is to choose at most $k$ centers so as to minimize the maximum distance from an input point to its nearest center. In the \emph{continuous} version, the centers may be placed anywhere in the ambient space. They are not required to be input points. For a center set $C$, we write
\[
    \cost_P(C)=\max_{p\in P}\min_{c\in C} d(p,c),
\]
and the optimum value is the minimum of this quantity over all $|C|\le k$.

A common way to make geometric clustering faster is to build a \emph{coreset}, which is a much smaller subset, or sometimes a weighted subset, that preserves the objective value sufficiently well. This paper constructs an unweighted  coreset for continuous $k$-center in the $D$-dimensional hyperbolic space $\Hd$, where $D$ is fixed. We want every exact continuous $k$-center solution on the reduced set to be a $(1+\eps)$-approximation on the original set.

In fixed-dimensional Euclidean space, such reductions are usually based on grids. After one has a constant-factor estimate of the optimal radius, a grid of mesh proportional to $\eps$ times that radius is fine enough to transfer solutions from the reduced set back to the input. The number of occupied grid cells near the relevant part of the input is polynomial in $1/\eps$ and independent of $n$. This is the geometric reason behind many Euclidean coreset constructions~\cite{agarwal2002exact,har2004coresets,agarwal2005geometric}.

Hyperbolic space behaves differently. Its local geometry is Euclidean-like at small scale, but its global volume grows exponentially with radius. Therefore, a uniform grid of a hyperbolic ball of radius $R$ has size exponential in $R$. This is precisely the obstacle, since a direct Euclidean-style construction would give a coreset whose size depends on the spatial extent of the input. Our goal is to remove this dependence on the input radius.

Hyperbolic space is widely used in machine learning because its geometry naturally represents hierarchical and tree-like data~\cite{nickel2017poincare,nickel2018learning,chami2020trees,lin2023hyperbolic}. Many algorithmic problems in hyperbolic and Gromov-hyperbolic spaces have also been studied, including diameter and radius computation, nearest-neighbor search, and tree-like approximations~\cite{chepoi2008diameters,krauthgamer2006algorithms,kisfaludi2025quadtree,park2025embeddings}. Coresets in hyperbolic space have also been considered for farthest-point and related problems, but those constructions do not directly apply to the $k$-center objective, where the maximum is taken over distances to the nearest of $k$ centers~\cite{park2025coresets}. The above limitations motivate the present work. To the best of our knowledge, this is the first study of coreset constructions for the $k$-center problem in hyperbolic space.

\subparagraph{Main result.}
Let $P\subset \BD$ be a set of $n$ points in the Poincar\'e ball model of $\Hd$. We use $\distH$ for hyperbolic distance and write
\[
    \OPT_k(P)=\min_{\substack{C\subseteq\Hd\\ |C|\le k}} \cost_P(C).
\]

\begin{theorem}[Main coreset theorem]\label{thm:main}
Let $D \ge 1$ be fixed, let $0<\eps<1$, and let $P \subset \BD$ be finite and nonempty. One can construct an $\eps$-coreset $P_\eps\subseteq P$ such that every optimal continuous $k$-center solution 
\[
C_{P_\eps}\in \argmin_{\substack{C\subseteq\Hd\\ |C|\le k}} \cost_{P_\eps}(C)
\]
satisfies $\cost_P(C_{P_\eps})\le (1+\eps)\OPT_k(P).$
Moreover, $|P_\eps| = \left({k}/{\eps}\right)^{O(kD)}$, and the construction time is $O\left( nkD \left({k}/{\eps}\right)^{O(kD)} \right)$.
In particular, for fixed $D$, $k$, and $\eps$, the coreset has constant size and can be constructed in $O(n)$ time.
\end{theorem}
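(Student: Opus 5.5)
We follow the strategy outlined in the abstract: first estimate the scale with Gonzalez, then use a grid-based coreset at small scale and a shell-and-cap-based coreset at large scale. Running farthest-first traversal for $k$ steps gives centers $g_1,\dots,g_k$ and a covering radius $R_G$ with $\OPT_k(P)\le R_G\le 2\OPT_k(P)$; this costs $O(nkD)$ time. Every point is assigned to its nearest $g_i$. The key reduction is a \emph{witness lemma}. Suppose $Q\subseteq P$ satisfies: for every center set $C$ with $|C|\le k$ and every $p\in P$ there is $q\in Q$ with $d_C(p)\le d_C(q)+\delta$, where $\delta\le \eps\OPT_k(P)/2$ (and $Q$ contains the Gonzalez centers). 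Then $\cost_P(C)\le \cost_Q(C)+\delta$ for all $C$. For an optimal $C_{Q}$ we get $\cost_Q(C_Q)\le \OPT_k(Q)\le\OPT_k(P)$, hence $\cost_P(C_Q)\le(1+\eps)\OPT_k(P)$ after rescaling $\eps$. So it suffices to build, for each cluster, a small witness family whose additive error is $O(\eps R_G)$.

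\emph{Small scale} ($R_G\le r_0$ for a constant $r_0$, say $1$): we apply a hyperbolic isometry (Möbius map) sending $g_i$ to the origin of $\BD$. The cluster then lies in a Euclidean ball of radius $\tanh(R_G/2)<1$, on which the conformal factor is bounded by constants. We impose a Euclidean grid of mesh $\Theta(\eps R_G/\sqrt D)$, which has $(1/\eps)^{O(D)}$ cells, and keep one input point per nonempty cell. Each point is then within hyperbolic distance $O(\eps R_G)$ of its representative, and the triangle inequality gives the witness property. This contributes $k(1/\eps)^{O(D)}$ points.

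\emph{Large scale} ($R_G> r_0$): within cluster $i$ we use polar coordinates $(r,\theta)$ around $g_i$. We cut $[0,R_G]$ into shells of width $\eps R_G$, so there are $O(1/\eps)$ shells, and project each point radially onto the mid-sphere $S_j$ of its shell. The projection changes distances to any center by at most $\eps R_G$. On $S_j$ of radius $\rho$, the hyperbolic law of cosines $\cosh d(c,x)=\cosh\rho\cosh t-\sinh\rho\sinh t\cos\angle(c,x)$ shows that $d(c,x)$ is a monotone function of the angle. Hence every sublevel set $\{x\in S_j: d(c,x)\le\tau\}$ is a spherical cap, whose angular radius is determined by $(c,\tau)$. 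The shell problem thus becomes: on the unit sphere $S^{D-1}$, given finitely many directions, choose a subset $W$ such that for every $k$ caps-with-monotone-cost, the maximum over all points of the min-cost is approximated, up to an additive $O(\eps R_G)$ error, by the maximum over $W$.

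We build the \emph{recursive $k$-cap certificate} by induction on $k$. The case $k=1$ needs, for each cap center direction, the point of maximal angle. An angular $\eps'$-net of directions with extreme witnesses per net cell handles this; the required angular precision is independent of $\rho$, because the sensitivity to angle saturates when expressed as an additive $O(\eps R_G)$ change in distance. That is the place where exponential volume growth is circumvented. For $k$ caps, we pick a farthest witness $w$ for the first cap, branch over a discretized family of $(1/\eps)^{O(D)}$ possible configurations of the cap containing $w$, and recurse with $k-1$ caps on the uncovered remainder. This gives $(k/\eps)^{O(D)}$ branches per level and depth $k$, for a total of $(k/\eps)^{O(kD)}$ witnesses per shell. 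Summing over $k$ clusters and $O(1/\eps)$ shells stays within $(k/\eps)^{O(kD)}$. Each level scans the points once, which gives the $O(nkD(k/\eps)^{O(kD)})$ running time.

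The main obstacle is the radius independence of the cap step. Specifically, one must show that a center at arbitrary hyperbolic distance $t$ from $g_i$ induces a cap whose boundary angle, as a function of the threshold $\tau$, is uniformly Lipschitz in the required additive sense (errors in $\tau$ of order $\eps R_G$ correspond to controlled angular perturbations). The recursion must also be designed so that the discretization of cap configurations does not depend on $\rho$ or $t$. I would attempt this by deriving from the law of cosines the explicit bound $d(c,x)\approx t+\rho+2\log\sin(\angle/2)$ for large $\rho,t$, and discretizing $\log\sin(\angle/2)$ at step $\eps R_G$.
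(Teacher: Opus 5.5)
Your architecture matches the paper's: the farthest-first scale, a local Euclidean grid after moving each anchor to the origin when $R_G$ is bounded, and, for large $R_G$, radial shells whose sections by hyperbolic balls are spherical caps. Your bound $R_G\le 2\OPT_k(P)$ is also correct for continuous centers, because the $k+1$ farthest-first points are pairwise at distance at least $R_G$.

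The gap is in the step that carries the whole difficulty: the radius-independent $k$-cap certificate. Your justification for $k=1$ is false. You claim the angular sensitivity ``saturates'', so that an angular net whose resolution does not depend on $\rho$ suffices. Take $c=(\rho,\zeta)$ and $x=(\rho,\xi)$ with $\theta=\angle(\xi,\zeta)\le 1$. The law of cosines gives $\cosh d_H(x,c)=1+\sinh^2\rho\,(1-\cos\theta)$. Hence $d_H(x,c)=2\rho+2\log\theta+O(1)$ once $\theta\gg e^{-\rho}$, while $d_H(x,c)=0$ at $\theta=0$.

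So the distance depends on $\log\theta$. Relative angular errors are harmless, but absolute ones are not. Two directions in the same cell of any fixed angular net can have distances to a center that differ by $\Theta(R_G)$ when $\rho=\Theta(R_G)$ and $R_G$ is large. That is far more than the allowed $\eps R_G$.

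Your closing suggestion, discretizing $\log\sin(\theta/2)$, points in the right (multiplicative) direction. However, the asymptotic formula $d\approx t+\rho+2\log\sin(\theta/2)$ behind it breaks down for small angles and for unequal radii. The paper replaces it by an exact inequality. With $a=\cosh(\bar r-s)\ge 1$ and $\cosh t\ge a$, one has $\cosh(t+h)-a\ge e^h(\cosh t-a)$. Hence the chordal cap radius satisfies $\rho_c(t+h;\bar r)\ge e^{h/2}\rho_c(t;\bar r)$ uniformly in $\bar r,s,t$ (Lemma~\ref{lem:cap-growth}). A threshold slack $\Delta=\Theta(\eps R_G)$ therefore absorbs a $(1+\gamma)$ enlargement of every cap, with $\gamma=\min\{1,e^{\Delta/2}-1\}\ge c_0\eps$; this is where $R_G\ge R_0$ is used. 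The shell problem thus becomes a purely multiplicative covering question for at most $k$ Euclidean balls.

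Your recursion does not answer that question, for two reasons. First, ``the farthest witness for the first cap'' and ``the uncovered remainder'' both depend on the queried $C$, so they do not define a fixed witness set. Second, discretizing cap configurations into $(1/\eps)^{O(D)}$ classes needs a reference scale, and caps and direction sets can live at arbitrarily small chordal scales.

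The missing idea is the scale-free recursion of Lemma~\ref{lem:euclidean-multiplicative-certificate} (after Agarwal, Har-Peled and Varadarajan):
\begin{enumerate}
\item For the current direction set $Z$, compute its Euclidean $m$-center radius $r^*$.
\item Take a $\tau$-net $N\subseteq Z$ with $\tau=\Theta(\gamma r^*)$; by packing, $|N|\le m(O(1/\gamma))^D$.
\item Set $Q_m(Z)=N\cup\bigcup_{u\in N}Q_{m-1}(Z_u)$, where the $Z_u$ are the fixed net cells.
\end{enumerate}
If at most $m$ balls cover $Q_m(Z)$, their largest radius $R$ satisfies $R\ge r^*-\tau$, hence $2\tau\le\gamma R$. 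A cell met by the largest ball therefore lies in that ball's $(1+\gamma)$ enlargement. A cell that some ball misses is met by at most $m-1$ of the balls, and these must cover $Q_{m-1}(Z_u)\subseteq Z_u$, so induction applies. This yields $k!\,(O(1/\gamma))^{kD}$ witnesses per shell, independent of $R_G$. Without the exact cap-growth inequality and this scale-adaptive recursion, your argument does not give a coreset whose size is independent of $R_G$.
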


The statement is about preserving input points rather than discretizing the possible centers. The centers of $C_{P_\eps}$ still range over the whole continuous hyperbolic space. This distinction is important because the coreset reduces the input size without replacing the continuous optimization problem by a finite candidate-center problem.

We complement the upper bound in Theorem~\ref{thm:main} with a coreset-size lower bound. For every fixed $D\ge1$, every $k\ge2$, and all sufficiently small $\eps$, there are instances for which every $\eps$-coreset has size $\Omega\left(k\eps^{-(D-1)/2}\right)$. In fact, every $\eps$-coreset of the constructed instance is equal to the whole input set. The lower bound is stated in Theorem~\ref{thm:coreset-lower-bound}.

\subparagraph{Our approach.}
Our construction uses the radius produced by farthest-first traversal as a reference length and treats the bounded-scale and large-scale regimes separately. We first run Gonzalez's farthest-first traversal~\cite{gonzalez1985clustering} and obtain $A_k=\{a_1,\ldots,a_k\}\subseteq P$,~$R_G=\cost_P(A_k).$ We call the points of $A_k$ \emph{anchors}.  Each input point is assigned to a nearest anchor, yielding a partition $P = P_1\mathbin{\dot\cup} P_2\mathbin{\dot\cup}\cdots\mathbin{\dot\cup}P_k.$ The value $R_G$ need not equal the optimal continuous $k$-center radius because the anchors are restricted to input points, whereas the centers in the continuous problem may lie anywhere in $\Hd$. Nevertheless, it provides a computable constant-factor estimate:
\[
    \OPT_k(P) \le R_G \le 4\OPT_k(P).
\]
The upper bound follows from the standard $2$-approximation guarantee for discrete $k$-center and the fact that the discrete optimum is at most twice the continuous optimum. We use $R_G$ only as a reference length for determining the resolution of the construction.

The main geometric obstruction is the exponential volume growth of hyperbolic balls. A direct Euclidean grid method over a hyperbolic ball of radius comparable to $R_G$ would therefore produce a number of cells that depends exponentially on $R_G$. To obtain a radius-independent coreset, we distinguish two regimes. Let $R_0>0$ be a fixed absolute constant. We call $R_G \le R_0$ the bounded-scale regime and $R_G \ge R_0$ the large-scale regime. We use a local grid in the former and a radial-shell construction in the latter.

In the bounded-scale regime, every cluster $P_i$ lies in a fixed-radius hyperbolic ball around its anchor. We map $a_i$ to the origin by a hyperbolic isometry and place a Euclidean grid over the image of $P_i$ in the Poincar\'e ball model.  Hyperbolic and Euclidean distances are uniformly comparable within this fixed-radius region. We use the grid to select representative input points by retaining one point of $P_i$ from each occupied cell. With a sufficiently fine resolution, every input point is within $O(\eps R_G)$ of its representative. Consequently, this representative set preserves the distance to every center set up to additive $O(\eps R_G)$. The grid does not restrict the possible center locations, which continue to range over all of $\Hd$.

In the large-scale regime, an ambient grid would no longer have radius-independent size. Instead, we use polar coordinates around each anchor and partition its cluster into radial shells. The shell width is chosen so that replacing the radius of a point by a representative shell radius changes its distance to any center by only $O(\eps R_G)$. We then project the points of each shell radially onto the hyperbolic sphere of its representative radius.

For a fixed representative radius $\bar r$, the exact hyperbolic law of cosines implies that, for every center $c \in \Hd$ and threshold $t$, the sublevel set
\[
    \left\{ \xi\in\mathbb S^{D-1}: d_H\bigl((\bar r,\xi),c\bigr)\le t \right\}
\]
is a spherical cap. Hence, for a center set $C \subseteq \Hd$ with $|C|\le k$, the sublevel set 
\[
    \left\{ \xi\in\mathbb S^{D-1}: d_H\bigl((\bar r,\xi),C\bigr)\le t \right\}
\]
is a union of at most $k$ spherical caps.

This observation reduces the witness-selection problem on a shell to a coverage problem for unions of $k$ spherical caps. We develop a recursive $k$-cap coverage certificate that selects a finite subset of the projected input directions. The certificate is chosen independently of the queried center set and works simultaneously for all cap directions and radii. Translating the selected directions back to their original input points yields a witness set for the shell, ensuring that for every set $C$ of at most $k$ continuous centers, a point that nearly realizes the maximum distance from the shell is represented by a retained input point up to an additive $O(\eps R_G)$.

Thus, the construction discretizes the input witness structure rather than the center space. The retained set is always a subset of $P$, while the centers of the reduced problem may still lie anywhere in $\Hd$. The cap certificate has size $\left({k}/{\eps}\right)^{O(kD)}$ for each shell, with all shell and cluster contributions absorbed into the same asymptotic bound. In particular, its size is independent of $R_G$.

The correctness proofs for both regimes follow the same transfer principle. We retain the anchors and construct $P_\eps\subseteq P$ so that every center set $C \subseteq \Hd$ with $|C|\le k$ and $\cost_{A_k}(C)\le R_G$ satisfies $\cost_P(C) \le \cost_{P_\eps}(C) + \gamma\eps R_G,$ where $\gamma>0$ is an absolute constant that can be made sufficiently small by adjusting the resolutions used in the construction.
Let 
\[
C_{P_\eps} \in \arg\min_{\substack{C\subseteq\Hd\\ |C|\le k}} \cost_{P_\eps}(C).
\]
Since $A_k \subseteq P_\eps$ and $A_k$ itself is a feasible center set for $P_\eps$,
\[
    \cost_{A_k}(C_{P_\eps}) \le \cost_{P_\eps}(C_{P_\eps}) \le \cost_{P_\eps}(A_k) \le R_G.
\]
Therefore, the transfer inequality applies to every optimal continuous solution $C_{P_\eps}$. Moreover, $\cost_{P_\eps}(C_{P_\eps}) = \OPT_k(P_\eps) \le \OPT_k(P)$ since $P_\eps \subseteq P$.
Combining these inequalities with $R_G\le 4\OPT_k(P)$, and choosing the construction constants so that $4\gamma \le 1$, gives
\[
    \cost_P(C_{P_\eps}) \le \cost_{P_\eps}(C_{P_\eps}) + \gamma\eps R_G \le \OPT_k(P) + 4\gamma\eps\OPT_k(P) \le (1+\eps)\OPT_k(P).
\]
Hence, every optimal continuous $k$-center solution for $P_\eps$ is a $(1+\eps)$-approximation for $P$.

\subparagraph{Relation to previous work.}
The $k$-center problem has been studied extensively in both general metric spaces and geometric spaces. In a general metric space, Gonzalez's farthest-first traversal gives a $2$-approximation for the discrete $k$-center problem~\cite{gonzalez1985clustering}. This approximation factor is best possible in polynomial time unless $\mathrm{P}=\mathrm{NP}$~\cite{hochbaum1985best}. The discrete and continuous versions should be distinguished.  In the discrete problem, the centers must be chosen from the input point set, whereas in the continuous problem the centers may be placed anywhere in the ambient space. Our construction uses Gonzalez's traversal only to obtain a constant-factor reference scale. The final guarantee is different from the general-metric guarantee because we obtain a $(1+\eps)$-approximation for the continuous $k$-center problem by exploiting the fixed-dimensional hyperbolic space.

In fixed-dimensional Euclidean space, the continuous $k$-center problem admits efficient approximation schemes. Agarwal and Procopiuc gave algorithms for Euclidean $k$-center based on geometric discretization and candidate-center structures~\cite{agarwal2002exact}. More recently, fine-grained lower bounds have clarified the dependence on $k$, $\eps$, and the dimension for Euclidean continuous $k$-center~\cite{blank2026fine}. These fine-grained lower bounds concern running time rather than the minimum cardinality of an unweighted input-subset $\eps$-coreset considered in this paper. Together, the approximation algorithms and fine-grained lower bounds give a detailed algorithmic understanding of the Euclidean problem, at least in the fixed-dimensional setting. The main Euclidean idea is that, once a constant-factor radius estimate is known, one may overlay a grid of mesh proportional to $\eps$ times this radius and keep a bounded number of representatives or candidate objects.

Coresets are a standard way to turn this kind of geometric discretization into input-size reduction. For many Euclidean clustering and extent problems, one can replace the input by a much smaller subset, or by a weighted subset, whose size depends on parameters such as $k$, $D$, and $\eps$, but not on the number of input points~\cite{agarwal2005geometric,feldman2011unified,cohen2020approximation}. However, the usual Euclidean grid argument does not directly extend to hyperbolic space. A hyperbolic ball of radius $R$ has volume exponential in $R$, and therefore a uniform grid of a large input region would have size depending exponentially on the input radius. This is precisely the dependence that our construction is designed to avoid.

Coresets in hyperbolic space have been considered for farthest-point and related problems~\cite{park2025coresets}. Those constructions preserve a maximum distance from a query object to the input. The continuous $k$-center objective is more delicate because it is a maximum over input points of the distance to the nearest among $k$ centers. Thus the relevant point inside a region depends on the joint configuration of all centers, not only on a single farthest-point direction. Our contribution is to combine a local Euclidean grid method, anchor-centered radial decompositions, and coverage certificates for unions of spherical caps to obtain a radius-independent coreset for the continuous $k$-center problem in hyperbolic space.


\section{Preliminaries}\label{sec:preliminaries}

\subsection{Metric notation and coresets}

Let $(X,d)$ be a metric space, and let $P \subset X$ be a finite nonempty point set. For $C \subseteq X$, define
\[
    d(p,C) := \min_{c\in C}d(p,c), \qquad \cost_P(C) := \max_{p\in P}d(p,C).
\]
We use the convention $d(p,\varnothing)=+\infty$.
The continuous \(k\)-center optimum is
\[
    \OPT_k(P) := \min_{\substack{C\subseteq X\\ |C|\le k}} \cost_P(C).
\]
We apply this definition in settings where the minimum is attained. In particular, this holds for finite inputs in $\Hd$,
since $\Hd$ is a proper metric space.

A nonempty subset $P_\eps \subseteq P$ is an \emph{$\eps$-coreset} for continuous $k$-center if
\[
    \cost_P(C_\eps) \le (1+\eps)\OPT_k(P) \qquad \text{for every} \qquad C_\eps \in \arg\min_{\substack{C\subseteq X\\ |C|\le k}} \cost_{P_\eps}(C).
\]

Thus, the coreset is a subset $P_\eps \subseteq P$ such that every exact optimum of the reduced instance is a $(1+\eps)$-approximate solution for the original point set $P$. This is stronger than merely preserving the optimum value, since it also rules out the existence of a bad optimal solution for $P_\eps$.

We use the following anchored form of the additive-to-multiplicative transfer principle.

\begin{lemma}\label{lem:transfer}
Let $A \subseteq P_\eps\subseteq P$, where $|A|\le k$, and let $R>0$. Assume that $\cost_{P_\eps}(A)\le R$ and that every center set $C \subseteq X$ satisfying $|C|\le k$ and $\cost_A(C)\le R$ also satisfies $\cost_P(C) \le \cost_{P_\eps}(C)+\eta.$ If $\eta \le \eps \OPT_k(P),$ then every optimal continuous $k$-center solution for $P_\eps$ is a $(1+\eps)$-approximation for $P$.
\end{lemma}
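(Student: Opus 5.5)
The plan is to follow the chain of inequalities sketched in the introduction, now in the abstract setting of a metric space $X$ where minima are attained. Fix an arbitrary optimal solution $C_\eps \in \arg\min_{|C|\le k} \cost_{P_\eps}(C)$; the aim is to show $\cost_P(C_\eps)\le(1+\eps)\OPT_k(P)$. The only real content is checking that $C_\eps$ meets the hypothesis of the transfer inequality, namely $|C_\eps|\le k$ (immediate) and $\cost_A(C_\eps)\le R$.

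First I will verify the anchored condition. Since $A\subseteq P_\eps$, monotonicity of the maximum gives $\cost_A(C_\eps)\le\cost_{P_\eps}(C_\eps)$. Since $|A|\le k$ and $A\subseteq X$, the set $A$ is itself a feasible center set for $P_\eps$, so optimality of $C_\eps$ yields $\cost_{P_\eps}(C_\eps)\le\cost_{P_\eps}(A)\le R$. Combining these, $\cost_A(C_\eps)\le R$, and the assumed transfer inequality gives $\cost_P(C_\eps)\le\cost_{P_\eps}(C_\eps)+\eta$.

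Next I will bound $\cost_{P_\eps}(C_\eps)=\OPT_k(P_\eps)$ by $\OPT_k(P)$. Let $C^\ast$ be an optimal continuous solution for $P$, which exists by the standing assumption that minima are attained. Because $P_\eps\subseteq P$, we have $\cost_{P_\eps}(C^\ast)\le\cost_P(C^\ast)=\OPT_k(P)$, and by optimality $\OPT_k(P_\eps)\le\cost_{P_\eps}(C^\ast)$. Finally, $\eta\le\eps\OPT_k(P)$ gives
\[
\cost_P(C_\eps)\le \OPT_k(P)+\eps\OPT_k(P)=(1+\eps)\OPT_k(P).
\]

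None of these steps is difficult. The one point needing care is that the transfer hypothesis is only assumed for anchored center sets. This is why the feasibility of $A$ for $P_\eps$, which relies on $A\subseteq X$ and $|A|\le k$, has to be used explicitly, together with $A\subseteq P_\eps$, before the hypothesis can be invoked.
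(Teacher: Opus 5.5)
Your proposal is correct and follows the paper's proof essentially step for step. You first derive $\cost_A(C_\eps)\le\cost_{P_\eps}(C_\eps)\le\cost_{P_\eps}(A)\le R$ to invoke the transfer hypothesis, then use $\cost_{P_\eps}(C_\eps)\le\cost_{P_\eps}(C^\ast)\le\cost_P(C^\ast)=\OPT_k(P)$, and finish with $\eta\le\eps\OPT_k(P)$.
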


\begin{proof}
Let
\[
    C_\eps \in \arg\min_{\substack{C\subseteq X\\|C|\le k}} \cost_{P_\eps}(C),
\]
and let $C^*$ be an optimal solution for $P$. Since $A \subseteq P_\eps$, $\cost_A(C_\eps) \le \cost_{P_\eps}(C_\eps) \le \cost_{P_\eps}(A) \le R.$
Hence the assumed transfer inequality applies to $C_\eps$.

Moreover, since $P_\eps \subseteq P$, $\cost_{P_\eps}(C_\eps) \le \cost_{P_\eps}(C^*) \le \cost_P(C^*) = \OPT_k(P).$
Therefore,
\[
    \cost_P(C_\eps) \le \cost_{P_\eps}(C_\eps)+\eta \le \OPT_k(P)+\eps\OPT_k(P) = (1+\eps)\OPT_k(P).
\]
\end{proof}

\subsection{The Poincar\'e ball model}
We work in the Poincar\'e ball model of $D$-dimensional hyperbolic space, where
\[
    \BD=\{x\in\R^D:\|x\|<1\}.
\]
The hyperbolic distance is
\[
    \distH(x,y)=\arcosh\left(1+\frac{2\|x-y\|^2}{(1-\|x\|^2)(1-\|y\|^2)}\right),
\]
and, in particular,
\[
    \distH(0,x)=2\artanh\|x\|.
\]

We will repeatedly use the fact that the origin of the Poincar\'e ball model can be moved by a hyperbolic isometry~\cite{ratcliffe2006foundations}.  More precisely, for every point $a\in \BD$, there exists an isometry $\mu_a:\BD\to\BD$ of the metric space $(\BD,\distH)$ such that $\mu_a(a)=O.$
Equivalently, $\mu_a$ is the inverse of the hyperbolic translation that sends the origin $O$ to $a$.  Since $\mu_a$ is an isometry, it preserves all hyperbolic distances
\[
\distH(x,y)=\distH\bigl(\mu_a(x),\mu_a(y)\bigr) \qquad\text{for all }x,y\in\BD .
\]
Thus, when we analyze the cluster assigned to an anchor $a$, we may apply $\mu_a$ to the whole cluster and assume, without changing any distance or cost value, that the anchor is the origin. For fixed dimension $D=O(1)$, this map, its inverse, hyperbolic distances, and anchor-centered polar coordinates can be evaluated in constant time. All running times in this paper are stated in the real-RAM model with constant-time evaluation of these operations.

For $x\ne 0$, write $x=\rho \xi$ with $\xi\in\mathbb S^{D-1}$ and $\rho=\|x\|$.  The hyperbolic radius is
\[
    r=\distH(0,x)=2\artanh \rho,
    \qquad
    \rho=\tanh(r/2).
\]
Thus an anchor-centered point can be written as $(r,\xi)$, where $r$ is a hyperbolic distance from the anchor and $\xi$ is a direction.

\subsection{Hyperbolic spheres and spherical caps}\label{subsec:hyperbolic-caps}
We record the hyperbolic law of cosines in the polar-coordinate form used in our construction, following~\cite{anderson2005hyperbolic,ratcliffe2006foundations}. As an immediate consequence, the intersection of a hyperbolic ball with a fixed-radius hyperbolic sphere corresponds, under the angular parametrization of that sphere, to a spherical cap. We include the derivation for completeness.

\begin{lemma}[Hyperbolic distance in polar coordinates]\label{lem:polar-cosine-law}
Let $x=(r,\xi)$, $y=(s,\zeta)$ be two points of $\Hd$, expressed in polar coordinates around the same origin. Then
\[
    \cosh\distH(x,y) = \cosh r\cosh s - \sinh r\sinh s\, \langle\xi,\zeta\rangle.
\]
Equivalently,
\[
    \cosh\distH(x,y) = \cosh(r-s) + \sinh r\sinh s \bigl(1-\langle\xi,\zeta\rangle\bigr).
\]
\end{lemma}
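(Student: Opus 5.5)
The plan is to compute the distance directly in the Poincar\'e ball model, using the explicit formula for $\distH$ from the preliminaries. Write $x=\rho\xi$ and $y=\sigma\zeta$ with $\rho=\tanh(r/2)$ and $\sigma=\tanh(s/2)$. Then $\|x-y\|^2=\rho^2+\sigma^2-2\rho\sigma\langle\xi,\zeta\rangle$, and
\[
\cosh\distH(x,y)=1+\frac{2(\rho^2+\sigma^2-2\rho\sigma\langle\xi,\zeta\rangle)}{(1-\rho^2)(1-\sigma^2)}.
\]
The degenerate cases $x=0$ or $y=0$, where the direction is undefined, are handled separately. For instance, if $r=0$ both sides equal $\cosh s$ regardless of the choice of $\xi$.

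Next we substitute the half-angle identities for $\rho=\tanh(r/2)$:
\[
\frac{1}{1-\rho^2}=\cosh^2(r/2),\qquad \frac{1+\rho^2}{1-\rho^2}=\cosh r,\qquad \frac{2\rho}{1-\rho^2}=\sinh r,
\]
and likewise for $\sigma$ and $s$. Expanding the product form and collecting terms, the claim reduces to the identity
\[
(1-\rho^2)(1-\sigma^2)+2\rho^2+2\sigma^2=(1+\rho^2)(1+\sigma^2),
\]
which holds because both sides equal $1+\rho^2+\sigma^2+\rho^2\sigma^2$. The constant part therefore becomes $\frac{(1+\rho^2)(1+\sigma^2)}{(1-\rho^2)(1-\sigma^2)}=\cosh r\cosh s$. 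The cross term becomes $-\frac{4\rho\sigma}{(1-\rho^2)(1-\sigma^2)}\langle\xi,\zeta\rangle=-\sinh r\sinh s\,\langle\xi,\zeta\rangle$. This gives the first form.

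The second form follows from $\cosh r\cosh s=\cosh(r-s)+\sinh r\sinh s$. Substituting this and grouping the $\sinh r\sinh s$ terms gives $\cosh(r-s)+\sinh r\sinh s\,(1-\langle\xi,\zeta\rangle)$.

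As an alternative route, one could use the hyperboloid model, where $x=(\cosh r,\sinh r\,\xi)$ and $\cosh\distH=-\langle x,y\rangle_{\mathrm{Lor}}$ gives the first form immediately. We prefer the direct Poincar\'e computation to stay within the model fixed in the paper. The only step needing care is the algebraic bookkeeping in the constant term, namely verifying the identity displayed above. That verification is routine.
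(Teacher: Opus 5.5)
Your proof is correct, but it takes a different route from the paper. The paper invokes the hyperbolic law of cosines for the geodesic triangle with vertices at the origin, $x$ and $y$, identifies the angle at the origin with $\arccos\langle\xi,\zeta\rangle$, and then rewrites using $\cosh(r-s)=\cosh r\cosh s-\sinh r\sinh s$.

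You instead verify the identity directly from the explicit Poincar\'e distance formula, using the half-angle substitutions $1/(1-\rho^2)=\cosh^2(r/2)$, $(1+\rho^2)/(1-\rho^2)=\cosh r$ and $2\rho/(1-\rho^2)=\sinh r$. Your algebra checks out, including the key identity $(1-\rho^2)(1-\sigma^2)+2\rho^2+2\sigma^2=(1+\rho^2)(1+\sigma^2)$.

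The paper's version is shorter, but it relies on an external theorem and on two facts it leaves implicit. First, radial segments from $O$ are geodesics. Second, because the Poincar\'e metric is conformal, the Riemannian angle at $O$ equals the Euclidean angle between $\xi$ and $\zeta$. The paper also tacitly applies the law of cosines to degenerate triangles, where $x$, $y$, $O$ are collinear or one of the points is $O$.

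Your computation is self-contained within the model the paper fixes, and it covers all of these cases uniformly. In fact, when $\rho=0$ the representation $x=\rho\xi$ is valid for any unit $\xi$, so you do not need a separate degenerate case.

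One remark is worth adding. The paper defines polar coordinates around a general anchor $a$ through the isometry $\mu_a$, so computing at the model origin $O$ loses no generality.
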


\begin{proof}
Let $\theta := \arccos\langle\xi,\zeta\rangle$ be the angle at the origin between the geodesic segments from the
origin to $x$ and $y$. By the hyperbolic law of cosines~\cite{anderson2005hyperbolic,ratcliffe2006foundations},
\[
    \cosh\distH(x,y) = \cosh r\cosh s - \sinh r\sinh s\cos\theta.
\]
Since $\cos\theta=\langle\xi,\zeta\rangle$, the first identity follows. The second follows from
\[
    \cosh(r-s) = \cosh r\cosh s-\sinh r\sinh s.
\]
\end{proof}

A \emph{spherical cap} is a subset of $\mathbb S^{D-1}$ of the form
\[
    K(\zeta,\tau) := \left\{ \xi\in\mathbb S^{D-1}: \langle\xi,\zeta\rangle\ge\tau \right\},
\]
where $\zeta\in\mathbb S^{D-1}$ and $\tau\in[-1,1]$.
We also regard the empty set and the whole sphere as degenerate spherical caps.

\begin{lemma}\label{lem:ball-section-cap}
Fix $r>0$, let $c=(s,\zeta) \in \Hd,$ and let $t \ge 0$. Then
\[
    K_r(c,t) := \left\{ \xi\in\mathbb S^{D-1}: \distH\bigl((r,\xi),c\bigr)\le t \right\}
\]
is a spherical cap.
More precisely, if $s>0$, then $K_r(c,t) = \left\{ \xi\in\mathbb S^{D-1}: \langle\xi,\zeta\rangle \ge \tau(r,s,t) \right\}$, where
\[
    \tau(r,s,t) := \frac{\cosh r\cosh s-\cosh t}{\sinh r\sinh s}.
\]
If $\tau(r,s,t)>1$, then $K_r(c,t)$ is empty, whereas if $\tau(r,s,t) \le -1$, then $K_r(c,t)$ is the whole sphere.
If $s=0$, then $K_r(c,t)$ is empty when $t<r$ and is the whole sphere when $t\ge r$.
\end{lemma}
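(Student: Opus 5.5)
The plan is to derive everything directly from Lemma~\ref{lem:polar-cosine-law}, using the fact that $\cosh$ is strictly increasing on $[0,\infty)$. Since distances and $t$ are nonnegative, the condition $\distH((r,\xi),c)\le t$ is equivalent to $\cosh\distH((r,\xi),c)\le\cosh t$. So the whole proof reduces to rewriting this inequality as a linear condition on $\langle\xi,\zeta\rangle$.

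First I treat the case $s>0$. By Lemma~\ref{lem:polar-cosine-law},
\[
\cosh\distH\bigl((r,\xi),c\bigr)=\cosh r\cosh s-\sinh r\sinh s\,\langle\xi,\zeta\rangle .
\]
The condition therefore reads
\[
\cosh r\cosh s-\sinh r\sinh s\,\langle\xi,\zeta\rangle\le\cosh t .
\]
Because $r>0$ and $s>0$, the product $\sinh r\sinh s$ is strictly positive. Dividing by it and rearranging gives exactly $\langle\xi,\zeta\rangle\ge\tau(r,s,t)$. Hence $K_r(c,t)$ equals $K(\zeta,\tau)$ whenever $\tau\in[-1,1]$.

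The boundary cases then follow from $\langle\xi,\zeta\rangle\in[-1,1]$:
\begin{itemize}
\item If $\tau>1$, no direction $\xi$ satisfies the inequality, so $K_r(c,t)$ is empty.
\item If $\tau\le-1$, every direction satisfies it, so $K_r(c,t)$ is the whole sphere.
\end{itemize}
Both are degenerate caps by the convention stated before the lemma.

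Second, I treat the case $s=0$, where $c$ is the origin and the direction $\zeta$ is undefined. Here $\distH((r,\xi),c)=r$ for every $\xi$, since $r$ is by definition the hyperbolic distance from the origin. (Equivalently, Lemma~\ref{lem:polar-cosine-law} gives $\cosh\distH=\cosh r$.) The set is therefore the whole sphere if $r\le t$ and empty if $t<r$.

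I see no real obstacle here. The only care needed is to divide by a strictly positive quantity, which requires $r>0$ (given) and $s>0$, and this is why $s=0$ is handled separately.
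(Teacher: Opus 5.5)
Your proposal is correct and follows the paper's proof essentially step for step. You apply Lemma~\ref{lem:polar-cosine-law}, use that $\cosh$ is strictly increasing on $[0,\infty)$, divide by $\sinh r\sinh s>0$ when $s>0$, and handle $s=0$ by noting that every point $(r,\xi)$ is at distance $r$ from the origin; your explicit derivation of the degenerate cases from $\langle\xi,\zeta\rangle\in[-1,1]$ just spells out what the paper leaves implicit.
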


\begin{proof}
Suppose first that $s>0$.  By Lemma~\ref{lem:polar-cosine-law},
\[
    \cosh\distH\bigl((r,\xi),c\bigr) = \cosh r\cosh s - \sinh r\sinh s\, \langle\xi,\zeta\rangle.
\]
Since $\cosh$ is strictly increasing on $[0,\infty)$, the inequality $\distH\bigl((r,\xi),c\bigr)\le t$ is equivalent to
\[
    \cosh r\cosh s - \sinh r\sinh s\, \langle\xi,\zeta\rangle \le \cosh t.
\]
Since $r,s>0$, we may divide by $\sinh r\sinh s>0$, obtaining
\[
    \langle\xi,\zeta\rangle \ge \frac{\cosh r\cosh s-\cosh t}{\sinh r\sinh s}.
\]
This is a spherical cap. 
If $s=0$, then $\distH\bigl((r,\xi),0\bigr)=r$ for every direction $\xi$, which gives the remaining cases.
\end{proof}

\begin{corollary}\label{cor:k-cap-union}
Fix $r>0$, let $C \subseteq \Hd$ satisfy $|C|\le k$, and let $t \ge 0$. Then
\[
    \left\{\xi\in\mathbb S^{D-1}: d_H\bigl((r,\xi),C\bigr)\le t \right\}
\]
is a union of at most $k$ spherical caps.
\end{corollary}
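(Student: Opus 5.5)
The plan is to write the sublevel set as a union over centers and then apply Lemma~\ref{lem:ball-section-cap} to each center separately. Since $C$ is finite, $d_H\bigl((r,\xi),C\bigr)=\min_{c\in C} d_H\bigl((r,\xi),c\bigr)$. The minimum is at most $t$ exactly when at least one of the terms is at most $t$. This gives
\[
    \left\{\xi\in\mathbb S^{D-1}: d_H\bigl((r,\xi),C\bigr)\le t \right\}
    = \bigcup_{c\in C} K_r(c,t),
\]
where $K_r(c,t)$ is the set defined in Lemma~\ref{lem:ball-section-cap}.

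Next, I apply Lemma~\ref{lem:ball-section-cap} to each $c\in C$. Write $c=(s,\zeta)$ in polar coordinates around the origin. If $c$ is the origin itself, then $s=0$ and $\zeta$ is arbitrary, and the lemma's case $s=0$ applies. In every case the lemma says $K_r(c,t)$ is a spherical cap, possibly degenerate: the empty set or the whole sphere. Both degenerate possibilities are allowed by the convention stated just before Lemma~\ref{lem:ball-section-cap}. Since $|C|\le k$, the union has at most $k$ terms, which proves the claim.

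Two edge cases need a remark. If $C=\varnothing$, then by the convention $d(p,\varnothing)=+\infty$ the set is empty, which is the empty union of zero caps (or a single degenerate empty cap). There is no real obstacle in this argument. The only points needing care are the degenerate cases: a center at the origin, and values of $\tau(r,s,t)$ outside $[-1,1]$. Lemma~\ref{lem:ball-section-cap} already handles both of these.
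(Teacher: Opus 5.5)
Your proof is correct and matches the paper's argument: you rewrite the sublevel set as $\bigcup_{c\in C}K_r(c,t)$ and apply Lemma~\ref{lem:ball-section-cap} to each center, with degenerate caps allowed by convention. Your explicit handling of $C=\varnothing$ via the convention $d(p,\varnothing)=+\infty$ is a small addition that the paper leaves implicit.
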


\begin{proof}
For every $\xi \in \mathbb S^{D-1}$,
\[
\begin{aligned}
    d_H\bigl((r,\xi),C\bigr)\le t \quad\Longleftrightarrow\quad \distH\bigl((r,\xi),c\bigr)\le t \text{ for some }c\in C.
\end{aligned}
\]
Consequently,
\[
    \left\{\xi: d_H\bigl((r,\xi),C\bigr)\le t \right\} = \bigcup_{c\in C}K_r(c,t),
\]
and every $K_r(c,t)$ is a spherical cap by Lemma~\ref{lem:ball-section-cap}.
\end{proof}

\subsection{Farthest-first anchors}
Gonzalez's traversal selects $a_1\in P$ arbitrarily and then repeatedly selects the input point farthest from the current selected set.  Let $A_k$ be the first $k$ selected points and let $R_G=\cost_P(A_k)$.  Assign each $p\in P$ to a nearest anchor.  Then $P_i\subseteq \Ball(a_i,R_G)$.

\begin{lemma}[Farthest-first scale]\label{lem:gonzalez}
The farthest-first radius satisfies
\[
    \OPT_k(P)\le R_G\le 4\OPT_k(P).
\]
Moreover, the nearest-anchor partition can be computed with $A_k$ and $R_G$ in $O(kn)$ time.
\end{lemma}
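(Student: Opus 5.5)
The lower bound $\OPT_k(P)\le R_G$ holds because $A_k\subseteq P\subseteq\Hd$ is a feasible center set for the continuous problem, with $|A_k|\le k$. Hence $\OPT_k(P)\le \cost_P(A_k)=R_G$. If $|P|\le k$, then $R_G=0=\OPT_k(P)$ and there is nothing more to prove, so I assume $|P|>k$.

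For the upper bound I will first rerun the standard Gonzalez argument to compare $R_G$ with the discrete optimum. Let $a_{k+1}\in P$ be a point realizing $\cost_P(A_k)=R_G$, so $d_H(a_{k+1},A_k)=R_G$. Farthest-first traversal is greedy, so each $a_j$ with $j\le k+1$ was chosen as a farthest point from $\{a_1,\dots,a_{j-1}\}$. The quantity $\cost_P(\{a_1,\dots,a_{j-1}\})$ is nonincreasing in $j$, which gives
\[
d_H(a_j,\{a_1,\dots,a_{j-1}\})=\cost_P(\{a_1,\dots,a_{j-1}\})\ge \cost_P(A_k)=R_G .
\]
Therefore the $k+1$ points $a_1,\dots,a_{k+1}$ are pairwise at distance at least $R_G$.

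Next, let $C^*$ be an optimal continuous solution, with $|C^*|\le k$ and radius $\OPT_k(P)$. By pigeonhole, two of the $k+1$ points, say $a_i$ and $a_j$, share a nearest center $c\in C^*$. The triangle inequality then gives
\[
R_G\le d_H(a_i,a_j)\le d_H(a_i,c)+d_H(c,a_j)\le 2\OPT_k(P).
\]
This is even stronger than the stated bound $R_G\le 4\OPT_k(P)$. It bypasses the two-step route (discrete $2$-approximation, then discrete optimum at most twice the continuous optimum) described in the introduction, and that route would also give the factor $4$. I would present the direct pigeonhole argument and remark that it implies the claimed inequality.

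For the running time, the traversal maintains for each $p\in P$ the value $d_H(p,\{a_1,\dots,a_j\})$ together with the index of a nearest anchor. Adding a new anchor updates these values in $O(n)$ distance evaluations, each of which is constant time in the real-RAM model for fixed $D$. Selecting the next farthest point is an $O(n)$ scan. Over $k$ rounds this costs $O(kn)$ in total. At the end, the maintained nearest-anchor indices give the partition $P_1,\dots,P_k$, and the maximum of the maintained distances gives $R_G$.

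I do not expect a real obstacle. The only points needing care are the monotonicity step that makes the $k+1$ points $R_G$-separated, and the degenerate case $|P|\le k$.
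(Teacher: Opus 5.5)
Your proof is correct, but your upper bound takes a different route from the paper's. The paper argues in two steps. It cites Gonzalez's $2$-approximation for the \emph{discrete} problem, $R_G\le 2\OPT^{\mathrm{disc}}_k(P)$. It then proves separately that $\OPT^{\mathrm{disc}}_k(P)\le 2\OPT_k(P)$, by replacing each optimal continuous center with an input point from its cluster. The two factors multiply to $4$.

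You instead reprove the separation property underlying Gonzalez's analysis: the points $a_1,\dots,a_{k+1}$ are pairwise at distance at least $R_G$, because $\cost_P(\{a_1,\dots,a_{j-1}\})$ is nonincreasing in $j$. You then apply the pigeonhole step directly against an optimal \emph{continuous} solution $C^*$. Nothing in that step needs the centers to be input points. So the discrete-to-continuous comparison, and the factor $2$ it costs, disappear, and you get $R_G\le 2\OPT_k(P)$.

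Your argument is self-contained and sharper. The paper's is shorter on the page because it takes the separation argument from the literature, but it loses a factor of $2$. The stronger constant is not needed anywhere in the paper; it would only give slack in steps like $\eps R_G/4\le\eps\OPT_k(P)$. Your running-time argument is the same as the paper's.
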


\begin{proof}[Sketch of proof]
The lower bound follows because $A_k\subseteq P\subseteq\Hd$ is a feasible set of continuous centers. For the upper bound, Gonzalez's farthest-first traversal gives a $2$-approximation for the discrete $k$-center problem in any metric space. The discrete optimum is at most twice the continuous optimum, since each continuous center can be replaced by an input point from the cluster it covers, increasing the radius by at most a factor of $2$. Hence 
\[
    R_G\le 2\OPT^{\mathrm{disc}}_k(P)\le 4\OPT_k(P).
\]
The running time follows by maintaining, for each input point, its nearest selected anchor during the $k$ iterations. The full proof is given in Appendix~\ref{app:proof-gonzalez}.
\end{proof}


\section{Bounded Scale: Local Grids}\label{sec:bounded}
Assume first that $R_G\le R_0$, where $R_0>0$ is a fixed constant. If $R_G=0$, then $P$ contains at most $k$ distinct points, so we set $P_\eps=P$ and are done. Hence assume that $0<R_G\le R_0$. After applying the isometry $\mu_{a_i}$, the cluster $P_i$ lies in the fixed hyperbolic ball $\Ball(O,R_0)$. In the Poincar\'e ball model, this hyperbolic ball is the
Euclidean ball centered at the origin with radius $\rho_0:=\tanh(R_0/2)$. On this fixed ball, hyperbolic and Euclidean distances are uniformly comparable:
\[
    2\|x-y\|\le \distH(x,y)\le L_0\|x-y\|,
\]
where $L_0$ depends only on $R_0$ and $D$. Hence a Euclidean grid of side length $h=\Theta(\eps R_G)$, with a sufficiently small constant depending only on $R_0$ and $D$, has the property that two points in the same grid cell are at hyperbolic distance at most $\eps R_G/4$.

For each anchor cluster $P_i$, we store one original input point from each occupied grid cell of $\mu_{a_i}(P_i)$, choosing $a_i$ as the representative of the cell containing the origin. Let the union of these representatives be $P_\eps$. Since $\mu_{a_i}(P_i)\subseteq\Ball(O,R_G)$ and this ball has Euclidean radius $\rho_G:=\tanh(R_G/2)\le R_G/2$, each cluster intersects $O((\rho_G/h)^D)=O(1/\eps^D)$ grid cells. The following theorem summarizes the bounded-scale guarantee, and its proof is deferred to Appendix~\ref{app:bounded-scale}.

\begin{theorem}[Bounded-scale coreset]\label{thm:bounded-scale}
Assume that $R_G\le R_0$, where $R_0>0$ is a fixed constant. Then one can construct an input subset $A_k\subseteq P_\eps\subseteq P$ such that every optimal continuous $k$-center solution $C_{P_\eps}$ for $P_\eps$ satisfies
\[
    \cost_P(C_{P_\eps})\le (1+\eps)\OPT_k(P).
\]
Moreover, $|P_\eps|=O(k/\eps^D)$, and the construction time is $O(kn+n+k/\eps^D)$.
In particular, for fixed $D$, $k$, and $\eps$, the coreset has constant size and can be constructed in $O(n)$ time.
\end{theorem}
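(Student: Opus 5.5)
The plan is to verify the hypotheses of Lemma~\ref{lem:transfer} with $A=A_k$, $R=R_G$ and $\eta=\eps R_G/4$. Lemma~\ref{lem:gonzalez} gives $R_G\le 4\OPT_k(P)$, so $\eta\le\eps\OPT_k(P)$, and the lemma then delivers the approximation guarantee for every optimal $C_{P_\eps}$.

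\emph{Anchors and the first hypothesis.} The grid representatives were chosen so that $a_i$ represents the cell containing the origin of $\mu_{a_i}(P_i)$. Hence $A_k\subseteq P_\eps\subseteq P$. Since every point of $P$ lies within $R_G$ of its anchor, we get $\cost_{P_\eps}(A_k)\le\cost_P(A_k)=R_G$, which is the first hypothesis.

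\emph{The transfer inequality.} Fix any $C\subseteq\Hd$ with $|C|\le k$; the restriction $\cost_{A_k}(C)\le R_G$ is not even needed here. Take $p\in P$ and let $i$ be its cluster index. Let $q\in P_\eps$ be the stored representative of the grid cell containing $\mu_{a_i}(p)$. Both $\mu_{a_i}(p)$ and $\mu_{a_i}(q)$ lie in the Euclidean ball of radius $\rho_0=\tanh(R_0/2)$, which is where the comparability $\distH(x,y)\le L_0\|x-y\|$ holds. Their Euclidean distance is at most $\sqrt D\,h$. Choosing $h=\eps R_G/(4L_0\sqrt D)$ and using that $\mu_{a_i}$ is an isometry gives $\distH(p,q)\le\eps R_G/4$. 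The triangle inequality then yields
\[
d(p,C)\le d(q,C)+\distH(p,q)\le\cost_{P_\eps}(C)+\eps R_G/4 ,
\]
and taking the maximum over $p$ gives $\cost_P(C)\le\cost_{P_\eps}(C)+\eta$.

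\emph{Size.} The image $\mu_{a_i}(P_i)$ lies in a Euclidean ball of radius $\rho_G=\tanh(R_G/2)\le R_G/2$. The number of side-$h$ cells meeting it is therefore $O((\rho_G/h+1)^D)=O((L_0\sqrt D/\eps)^D)=O(1/\eps^D)$, since $D$ and $R_0$ are fixed. The key point is that $\rho_G/h$ is independent of $R_G$, because both scale linearly with $R_G$. Summing over $k$ clusters gives $|P_\eps|=O(k/\eps^D)$.

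\emph{Running time.} Gonzalez's traversal and the nearest-anchor partition take $O(kn)$ time by Lemma~\ref{lem:gonzalez}. Applying $\mu_{a_i}$ and computing floor-based cell indices costs $O(1)$ per point. Deduplication is done with hashing on cell indices, in $O(n)$ expected time, or by a bucketing argument for deterministic bounds. Outputting the representatives costs $O(k/\eps^D)$.

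\emph{Main obstacle.} The only delicate step is the comparability constant. One must check the upper bound $\distH\le L_0\|x-y\|$ on the ball of radius $\rho_0$, for instance from $\arcosh(1+z)\le\sqrt{2z}$ together with the conformal factor bounded by $2/(1-\rho_0^2)$. One must also make sure the cell diameter and the degenerate case $R_G=0$ (handled by $P_\eps=P$) fit the bound. Everything else is bookkeeping.
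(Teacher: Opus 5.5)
Your proposal is correct and follows essentially the same route as the paper's proof. It uses the same anchor-preserving Euclidean grid of mesh $h=\eps R_G/(4L_0\sqrt D)$ after applying $\mu_{a_i}$, the same additive inequality $\cost_P(C)\le\cost_{P_\eps}(C)+\eps R_G/4$ for all $C$, and the same $R_G$-independent cell count via $\rho_G/h\le 2L_0\sqrt D/\eps$. The differences are cosmetic and both valid: you close via Lemma~\ref{lem:transfer} where the paper compares directly with an optimum $C^\star$ for $P$, and you get $\distH\le L_0\|x-y\|$ from $\arcosh(1+z)\le\sqrt{2z}$ where the paper integrates the length element along the Euclidean segment.
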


This completes the bounded-scale case. The remaining sections explain how to avoid a grid whose size grows with the farthest-first scale $R_G$.

\section{Large Scale: Radial Shells and Spherical-Cap Certificates}\label{sec:radial-shells}

We now assume that $R_G \ge R_0.$ For every anchor $a_i$, we apply the hyperbolic isometry $\mu_{a_i}$ sending $a_i$ to the origin. For every point $p \in P_i \setminus A_k$, write
\[
    \mu_{a_i}(p) = \bigl(r_i(p),\xi_i(p)\bigr), \qquad r_i(p)=d_H(a_i,p), \qquad \xi_i(p)\in\mathbb S^{D-1}.
\]
Since $P_i \subseteq \Ball(a_i,R_G)$, we have $0<r_i(p) \le R_G.$ Put $\Delta:={\eps R_G}/{8}.$
Partition $[0,R_G]$ into intervals $I_1,\ldots,I_L$, $L \le \left\lceil{8}/{\eps}\right\rceil,$ each of length at most $\Delta$. For each $I_j$, let $\bar r_j$ be its midpoint and define the corresponding radial shell
\[
    B(i,j) := \left\{p\in P_i\setminus A_k: r_i(p)\in I_j \right\}.
\]
For $p=\bigl(r_i(p),\xi_i(p)\bigr)\in B(i,j)$, define its radial projection onto the representative sphere by
\[
    \pi_j(p) := \bigl(\bar r_j,\xi_i(p)\bigr).
\]
Since radial lines are hyperbolic geodesics, $d_H\bigl(p,\pi_j(p)\bigr) = |r_i(p)-\bar r_j| \le {\Delta}/{2}.$
There are at most $k\left\lceil{8}/{\eps}\right\rceil$ nonempty radial shells.

The purpose of the radial projection is to reduce the geometry of each shell to a coverage problem on a single copy of
$\mathbb S^{D-1}$. The projection is used only in the analysis and in the selection of input witnesses. All points retained in the coreset are original input points, and the admissible center locations remain unrestricted.

\subsection{Ball--Sphere Intersections and Shell Certificates}
Fix a representative sphere of radius $\bar r>0$ around an anchor. Write a point on the sphere as $(\bar r,\xi)$, and write a center as $c=(s,\zeta)$. By Lemma~\ref{lem:polar-cosine-law},
\[
    \cosh d_H\bigl((\bar r,\xi),c\bigr) = \cosh(\bar r-s) + \sinh\bar r\,\sinh s \bigl(1-\langle\xi,\zeta\rangle\bigr).
\]
For $s>0$, define
\[
    \rho_c(t;\bar r)^2 := \frac{2\bigl(\cosh t-\cosh(\bar r-s)\bigr)}{\sinh\bar r\,\sinh s}.
\]
Whenever the right-hand side is nonnegative, $d_H\bigl((\bar r,\xi),c\bigr)\le t$ is equivalent to $\|\xi-\zeta\|_2 \le \rho_c(t;\bar r)$. Thus the sublevel set on the representative sphere is a spherical cap.
If $C \subseteq \Hd$ and $|C|\le k$, then $\left\{\xi\in\mathbb S^{D-1}: d_H\bigl((\bar r,\xi),C\bigr)\le t \right\}$ is a union of at most $k$ spherical caps.

The following Structure Theorem gives a finite coverage certificate for such unions.

\begin{lemma}[Approximate $k$-cap coverage certificate]\label{lem:algorithmic-cap-certificate}
Let $X\subseteq\mathbb S^{D-1}$ be finite and let $0<\gamma\le1$.
There exists a subset $W_k(X,\gamma)\subseteq X$ such that $|W_k(X,\gamma)| \le T_{k,D}(\gamma)$, 
\[
T_{k,D}(\gamma) := k! \left({A}/{\gamma}\right)^{kD},
\]
where $A>1$ is a universal constant.

For every family of at most $k$ spherical caps $K_\ell = B_2(\zeta_\ell,\rho_\ell) \cap\mathbb S^{D-1}$, $\ell=1,\ldots,m$, $m\le k$, the implication $W_k(X,\gamma) \subseteq \bigcup_{\ell=1}^m K_\ell$ implies $X \subseteq \bigcup_{\ell=1}^m \left( B_2\bigl(\zeta_\ell,(1+\gamma)\rho_\ell\bigr) \cap\mathbb S^{D-1} \right)$.

Moreover, $W_k(X,\gamma)$ can be computed in $O\left(|X|\,kD\,T_{k,D}(\gamma) \right)$ time in the real-RAM model.
\end{lemma}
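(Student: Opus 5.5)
I would argue by induction on $k$, building $W_k(X,\gamma)$ recursively from certificates for $k-1$ caps. The first step is to recast each cap as a halfspace condition. For $\xi,\zeta\in\mathbb S^{D-1}$ we have $\|\xi-\zeta\|^2=2(1-\langle\xi,\zeta\rangle)$. Hence $X\subseteq B_2(\zeta,(1+\gamma)\rho)$ says exactly that
\[
\max_{x\in X}\|x-\zeta\|\le(1+\gamma)\rho .
\]
So the case $k=1$ is a multiplicative farthest-point coreset for query points on the sphere. Here is the key observation. If $x_0\in X$ attains $\max_{x\in X}\|x-\zeta\|$, the triangle inequality gives $\max_{x\in X}\|x-\zeta\|\ge\operatorname{diam}(X)/2$. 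Now take $W_1$ to contain one point of $X$ from each occupied cell of a Euclidean grid of mesh $\gamma\operatorname{diam}(X)/(4\sqrt D)$, restricted to a box of side $\operatorname{diam}(X)$ containing $X$. The additive error is then at most $\gamma\operatorname{diam}(X)/4$. This is at most a $\gamma/2$ fraction of the maximum, so $W_1\subseteq K$ forces $X\subseteq(1+\gamma)K$. This uses $(A/\gamma)^D$ points and $O(|X|D)$ time.

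For general $k$ I would use the same mesh $\delta=c\gamma\operatorname{diam}(X)$ and a net $N\subseteq X$ of at most $(A/\gamma)^D$ representatives, and argue by cases on the caps covering $W_k$. Let $K_1$ be a cap of largest radius $\rho_1$. If $\rho_1\gtrsim\delta/\gamma$, enlarging caps by the factor $(1+\gamma)$ absorbs the mesh error on the part of $X$ near points of $N$ that $K_1$ covers. The remaining uncovered part must then be handled by the other $k-1$ caps. To make this certifiable independently of the caps, I would let the recursion guess which net point $y\in N$ lies deepest in the distinguished cap, and which of the (at most $k$) caps that is. The corresponding subproblem is on the subset $X_y\subseteq X$ of points far from $y$ at the relevant scale, and it receives a certificate $W_{k-1}(X_y,\gamma)$. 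The candidate definition is
\[
W_k(X,\gamma)=W_1(X,\gamma)\cup N\cup\bigcup_{y\in N}W_{k-1}(X_y,\gamma).
\]
This gives $T_k\le k(A/\gamma)^D T_{k-1}$ (the factor $k$ coming from guessing the distinguished cap), which unrolls to $k!(A/\gamma)^{kD}$. The running time follows the same recursion, each level costing $O(|X|D)$ per call, for the stated $O(|X|kDT_{k,D}(\gamma))$ total.

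The main obstacle is scale separation in the case where all caps are tiny compared with $\operatorname{diam}(X)$. Then the additive grid error at mesh $\gamma\operatorname{diam}(X)$ is not small relative to any $\rho_\ell$, so the $k=1$ argument breaks. I would try to handle this with a clustering dichotomy. Suppose that for caps of radius $\le\delta/\gamma$ covering the net, the net splits into groups at mutual distance larger than the cap diameters. Then each cap meets only one group, and I recurse on each group with its own, smaller diameter. Since there are at most $k$ caps, at most $k$ groups are nonempty, and the number of caps is distributed among them, so the recursion depth in $k$ is still bounded. I would first check whether this splitting can be folded into the guess over $y\in N$, so that $X_y$ is precisely the group containing $y$ (rescaled to its own diameter), keeping the size recursion multiplicative as above. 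If this fails, a fallback is to weaken the constant $A$, or to accept an extra $k$-dependent factor that still fits within $k!(A/\gamma)^{kD}$.
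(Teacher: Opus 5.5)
\textbf{There is a genuine gap in the general case.} Your $k=1$ case is fine. (A grid of mesh $\gamma\operatorname{diam}(X)/(4\sqrt D)$ has $(O(\sqrt D/\gamma))^D$ cells, so to get a universal $A$ you would use a maximal net and a volume bound instead.) The problem is $k\ge2$.

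The subsets $X_y$ of ``points far from $y$ at the relevant scale'' cannot be fixed independently of the query. The relevant scale is the radius $\rho_1$ of the largest cap, which ranges over a continuum. For any fixed choice of $X_y$, one of two things happens:
\begin{itemize}
\item $W_{k-1}(X_y,\gamma)$ contains points covered only by $K_1$. Then the other $k-1$ caps need not cover it, and the induction hypothesis cannot be applied.
\item $X_y$ misses points of $X$ not covered by the enlarged $K_1$. Then the induction conclusion does not cover all of $X$.
\end{itemize}
Guessing a deepest net point and a distinguished cap does not remove this dependence.

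The tiny-cap obstacle you flag is real, and it comes from normalizing by $\operatorname{diam}(X)$. Your clustering fallback inherits the same defect, because the groups depend on the cap radii. Moreover, when all $k$ caps fall into one group you recurse with the same number of caps on a smaller diameter. The recursion depth then depends on the spread of $X$, and the bound $k!(A/\gamma)^{kD}$ is lost.

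\textbf{The missing idea} is to take the scale to be the continuous $k$-center radius $r^*=r_k^*(X)$ rather than the diameter; Gonzalez estimates it within a factor $4$. The construction is as follows.
\begin{itemize}
\item Let $N\subseteq X$ be a maximal $\tau$-separated set with $\tau=\Theta(\gamma r^*)$.
\item Partition $X$ into the cells $Z_u$, $u\in N$, of points nearest to $u$. This partition is independent of the caps, and each cell has diameter at most $2\tau$.
\item Packing inside $k$ balls of radius $r^*$ gives $|N|\le k(O(1/\gamma))^D$.
\item Put $W_k=N\cup\bigcup_{u\in N}W_{k-1}(Z_u,\gamma)$.
\end{itemize}

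Now suppose $\ell\le k$ balls with largest radius $R$ cover $W_k$. They cover $N$, so balls of radius $R+\tau$ with the same centers cover $X$. Hence $R\ge r^*-\tau$ and $2\tau\le\gamma R$; this is exactly what excludes your regime of uniformly tiny caps. For each cell $Z_u$ there are two cases:
\begin{itemize}
\item All $\ell$ balls meet $Z_u$. Then the $(1+\gamma)$-enlargement of the largest one covers $Z_u$.
\item At most $\ell-1$ balls meet $Z_u$. These must cover $W_{k-1}(Z_u,\gamma)\subseteq Z_u$, so induction applies to the cell.
\end{itemize}
The size recursion $S_k\le k(O(1/\gamma))^D(1+S_{k-1})$ then gives $k!(A/\gamma)^{kD}$. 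The factor $k$ comes from the packing bound, not from guessing which cap is distinguished. This is the paper's argument, applied to $X\subseteq\mathbb R^D$ with the caps viewed as Euclidean balls.
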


\begin{proof}
We regard $X$ as a subset of the ambient Euclidean space $\mathbb R^D$ and apply the multiplicative Euclidean
$k$-ball certificate proved in Appendix~\ref{app:proof-cap-certificate}.

Indeed, suppose that $W_k(X,\gamma) \subseteq \bigcup_{\ell=1}^m \left(B_2(\zeta_\ell,\rho_\ell) \cap\mathbb S^{D-1} \right)$.
Since $W_k(X,\gamma)\subseteq\mathbb S^{D-1}$, this implies $W_k(X,\gamma) \subseteq \bigcup_{\ell=1}^m B_2(\zeta_\ell,\rho_\ell)$.
Lemma~\ref{lem:euclidean-multiplicative-certificate}, applied in dimension $D$, therefore gives
\[
    X \subseteq \bigcup_{\ell=1}^m B_2\bigl(\zeta_\ell,(1+\gamma)\rho_\ell\bigr).
\]
Intersecting the right-hand side with $\mathbb S^{D-1}$ proves the claimed cap-covering implication. The size and running-time bounds follow from the same lemma.
\end{proof}

This theorem does not discretize the cap centers or the corresponding hyperbolic centers. Instead, it selects a finite family of input directions that detects whether a union of at most $k$ caps fails to cover the complete direction set, even after a controlled multiplicative enlargement of the cap radii.

\subsection{Witnesses for a radial shell}
Set $\gamma_G := \min\left\{1, \exp\left({\Delta}/{2}\right)-1 \right\} = \min\left\{1, \exp\left({\eps R_G}/{16}\right)-1 \right\}$.
For a nonempty radial shell $B=B(i,j)$, let 
\[
X_B := \left\{\xi_i(p): p\in B \right\} \subseteq\mathbb S^{D-1}.
\]
Compute $\overline W_B := W_k(X_B,\gamma_G)$. For every direction in $\overline W_B$, retain one arbitrary original
point of $B$ having that direction. Denote the resulting input subset by $W(B)$.

The following lemma is the shell-level witness property. Its proof is given in Appendix~\ref{app:shell-witness}.

\begin{lemma}\label{lem:shell-witness}
For every nonempty radial shell $B$, the set $W(B) \subseteq B$ satisfies $|W(B)| \le T_{k,D}(\gamma_G)$ and, for every center set $C \subseteq \Hd$ with $|C|\le k$,
\[
    \max_{p\in B}d_H(p,C) \le \max_{q\in W(B)}d_H(q,C) + \frac{\eps R_G}{4}.
\]
Furthermore, $W(B)$ can be computed in $O\bigl(|B|\,kD\,T_{k,D}(\gamma_G) \bigr)$ time.
\end{lemma}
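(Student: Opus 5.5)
The plan is to pass through the representative sphere of the shell and use the $k$-cap certificate of Lemma~\ref{lem:algorithmic-cap-certificate} in a threshold form. The size and running-time bounds come for free: $|W(B)|\le|\overline W_B|\le T_{k,D}(\gamma_G)$, and the time is that of computing $W_k(X_B,\gamma_G)$ with $|X_B|\le|B|$, plus one linear pass to pick one original point per selected direction. For the distance inequality, fix $C$ with $|C|\le k$ and set $t:=\max_{q\in W(B)}d_H(q,\pi_j(C))$; more precisely, let $t':=\max_{q\in W(B)}d_H(\pi_j(q),C)$. Since $d_H(p,\pi_j(p))\le\Delta/2$ for all $p\in B$, we have $t'\le \max_{q\in W(B)}d_H(q,C)+\Delta/2$, and $\max_{p\in B}d_H(p,C)\le \max_{p\in B}d_H(\pi_j(p),C)+\Delta/2$. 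It therefore suffices to show $d_H((\bar r,\xi),C)\le t'+\Delta$ for every $\xi\in X_B$, since then the total slack is $2\Delta=\eps R_G/4$.

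By Corollary~\ref{cor:k-cap-union}, each $\xi\in\overline W_B$ lies in $\bigcup_{c\in C}K_{\bar r}(c,t')$, a union of at most $k$ caps. Centers with $s=0$ or with $\tau\le -1$ give the whole sphere and make the claim trivial. Empty caps can be discarded. Each remaining cap is written as $B_2(\zeta_c,\rho_c(t';\bar r))\cap\mathbb S^{D-1}$ with $\rho_c^2=2(\cosh t'-\cosh(\bar r-s))/(\sinh\bar r\sinh s)$. Lemma~\ref{lem:algorithmic-cap-certificate} then gives that every $\xi\in X_B$ satisfies $\|\xi-\zeta_c\|\le(1+\gamma_G)\rho_c(t';\bar r)$ for some $c$. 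The remaining step, and the main obstacle, is to convert this multiplicative enlargement of the Euclidean chord radius into an additive increase of at most $\Delta$ in the hyperbolic threshold. That is, we must show
\[
(1+\gamma_G)^2\bigl(\cosh t'-\cosh(\bar r-s)\bigr)\le \cosh(t'+\Delta)-\cosh(\bar r-s),
\]
or at least the analogous inequality with $\cosh(t'+\Delta)$ replaced by the cosh of whatever distance is actually attained.

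To verify this, I would use $(1+\gamma_G)^2\le e^{\Delta}$. This holds because $1+\gamma_G\le e^{\Delta/2}$ by the choice of $\gamma_G$. It is then enough to check $e^{\Delta}(\cosh t'-a)\le\cosh(t'+\Delta)-a$ with $a=\cosh(\bar r-s)\ge1$ and $\cosh t'\ge a$. Since $e^\Delta\ge1$, the left side is at most $e^\Delta\cosh t'-a$, so it suffices to show $e^\Delta\cosh t'\le\cosh(t'+\Delta)+(e^\Delta-1)a$. Now $e^{\Delta}\cosh t'-\cosh(t'+\Delta)=\tfrac12(e^{\Delta-t'}-e^{-t'-\Delta})=e^{-t'}\sinh\Delta\le\sinh\Delta$, and $(e^\Delta-1)a\ge e^\Delta-1\ge\sinh\Delta$. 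This closes the inequality.

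Monotonicity of $\cosh$ then yields $d_H((\bar r,\xi),c)\le t'+\Delta$. One also has to handle the case where the enlarged chord radius exceeds $2$ (the cap is then the whole sphere); this is harmless because the inequality is applied pointwise. Combining the three $\Delta/2$ and $\Delta$ losses gives the additive bound $\eps R_G/4$.
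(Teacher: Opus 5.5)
Your proposal is correct and follows the paper's proof essentially step for step: you cover the projected witnesses at a level $t'\le T+\Delta/2$, apply the $k$-cap certificate, and turn the $(1+\gamma_G)$ chord enlargement into an additive $\Delta$ using the same computation as the paper's cap-growth lemma (Lemma~\ref{lem:cap-growth}), for a total slack of $2\Delta=\eps R_G/4$. The only blemish is the clause ``the left side is at most $e^\Delta\cosh t'-a$, so it suffices\ldots''. That intermediate bound does not imply what follows, but the inequality you then verify is an exact rearrangement of the target, so the argument stands.
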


The shell witnesses preserve only the maximum distance over the shell. They do not provide a pointwise approximation of the distance function of every input point. The witness attaining the maximum over $W(B)$ may depend on the queried continuous center set $C$, whereas the family $W(B)$ itself is selected independently of $C$.

\subsection{The large-scale coreset}
We retain all anchors and all shell witnesses:
\[
    P_\eps := A_k \cup \bigcup_{\substack{1\le i\le k\\1\le j\le L\\B(i,j)\ne\varnothing}} W\bigl(B(i,j)\bigr).
\]
Since $R_G \ge R_0$ and $e^x-1\ge x$ for $x \ge 0$, we have $\gamma_G \ge c_0\eps$, 
\[
c_0 := \min\left\{1, \frac{R_0}{16} \right\}.
\]
Consequently,
\[
    T_{k,D}(\gamma_G) \le k! \left( \frac{A}{c_0\eps} \right)^{kD}.
\]

\begin{theorem}[Large-scale coreset]\label{thm:large}
Assume that $R_G \ge R_0$. The radial-shell spherical-cap construction returns an input subset $A_k \subseteq P_\eps \subseteq P$ such that every optimal continuous $k$-center solution for $P_\eps$ is a $(1+\eps)$-approximation for $P$.

Moreover,
\begin{equation*}
    |P_\eps| \le k + k\left\lceil\frac{8}{\eps}\right\rceil k! \left(\frac{A}{c_0\eps} \right)^{kD} = \left(\frac{k}{\eps}\right)^{O(kD)},
\end{equation*}
and the construction time is $O\left(nkD \left({k}/{\eps}\right)^{O(kD)} \right)$.
\end{theorem}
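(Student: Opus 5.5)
The plan is to reduce Theorem~\ref{thm:large} to the anchored transfer principle, Lemma~\ref{lem:transfer}, with $A=A_k$, $R=R_G$, and $\eta=\eps R_G/4$.

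\textbf{Hypotheses of Lemma~\ref{lem:transfer}.} The inclusions $A_k\subseteq P_\eps\subseteq P$ hold by construction. Each witness set $W(B)$ lies in its shell $B$, and $P_\eps$ is a union of anchors and such witness sets. Moreover $\cost_{P_\eps}(A_k)\le\cost_P(A_k)=R_G$. By Lemma~\ref{lem:gonzalez}, $\eta=\eps R_G/4\le\eps\OPT_k(P)$.

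\textbf{Transfer inequality.} It remains to show $\cost_P(C)\le\cost_{P_\eps}(C)+\eps R_G/4$ for every $C\subseteq\Hd$ with $|C|\le k$. The hypothesis $\cost_{A_k}(C)\le R_G$ is not even needed for this. Take $p\in P$ attaining $\cost_P(C)$. There are two cases.
\begin{itemize}
\item If $p\in A_k$, then $p\in P_\eps$, so $d_H(p,C)\le\cost_{P_\eps}(C)$.
\item Otherwise $p\in P_i\setminus A_k$ for some $i$. Since $0<r_i(p)\le R_G$ and the intervals $I_j$ cover $[0,R_G]$, $p$ lies in some nonempty shell $B=B(i,j)$. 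Lemma~\ref{lem:shell-witness} then gives
\[
d_H(p,C)\le\max_{q\in W(B)}d_H(q,C)+\eps R_G/4\le\cost_{P_\eps}(C)+\eps R_G/4,
\]
because $W(B)\subseteq P_\eps$.
\end{itemize}
Lemma~\ref{lem:transfer} then yields the $(1+\eps)$ guarantee for every optimal solution of $P_\eps$.

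\textbf{Size.} There are $k$ anchors and at most $kL\le k\lceil 8/\eps\rceil$ nonempty shells, each contributing at most $T_{k,D}(\gamma_G)\le k!(A/(c_0\eps))^{kD}$ witnesses. This uses $\gamma_G\ge c_0\eps$, as derived just before the theorem. Since $c_0$ is an absolute constant, $k!\le k^k$ and $\lceil 8/\eps\rceil\le 9/\eps$, the total is $(k/\eps)^{O(kD)}$.

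\textbf{Running time.} Gonzalez's traversal and the partition cost $O(kn)$ by Lemma~\ref{lem:gonzalez}. Applying $\mu_{a_i}$ and computing polar coordinates and shell indices costs $O(1)$ per point, and grouping points into shells takes $O(n+kL)$ using bucketing by index. Lemma~\ref{lem:shell-witness} gives $O(|B|kD\,T_{k,D}(\gamma_G))$ time per shell. The shells are disjoint, so summing over them gives $O(nkD(k/\eps)^{O(kD)})$. Mapping each selected direction back to an original point is done by storing a pointer from each direction in $X_B$ to one point of $B$ with that direction.

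\textbf{Expected obstacle.} Given the shell lemma, this theorem is bookkeeping. The only delicate points are that every non-anchor point really lies in some shell, which needs the $I_j$ to cover $(0,R_G]$, and that the constant $c_0$ stays independent of $R_G$. The latter is exactly where the assumption $R_G\ge R_0$ is used.
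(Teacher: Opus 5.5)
Your proposal is correct and follows the paper's proof essentially step for step. You prove the transfer inequality by the same anchor/shell case split using Lemma~\ref{lem:shell-witness}, apply Lemma~\ref{lem:transfer} with $A=A_k$, $R=R_G$, $\eta=\eps R_G/4$, and then obtain the size and time bounds by the same shell count with $\gamma_G\ge c_0\eps$ and the same summation of per-shell certificate costs; your observations that the hypothesis $\cost_{A_k}(C)\le R_G$ goes unused and that bucketing and back-pointers handle the bookkeeping are correct details the paper leaves implicit.
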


The proof, including the additive-to-multiplicative transfer and the construction-time analysis, is given in Appendix~\ref{app:nonbounded-coreset}.


\section{Combining the Two Scale Regimes}\label{sec:algorithm}
We now combine the two scale regimes. The overall algorithm is given in Algorithm~\ref{alg:coreset}. It first computes the farthest-first scale $R_G$ and then chooses the bounded-scale or large-scale construction.

\begin{algorithm}[tb]
\caption{Hyperbolic continuous $k$-center coreset}\label{alg:coreset}
\begin{algorithmic}[1]
\Require finite set $P\subset\BD$, integer $k$, parameter $0<\eps<1$
\Ensure coreset $P_\eps\subseteq P$
\State compute farthest-first anchors $A_k$ and radius $R_G=\cost_P(A_k)$
\If{$R_G\le R_0$}
    \State return the bounded-scale local-grid coreset
\Else
    \State partition every anchor cluster into radial shells
    \State project every shell onto its representative sphere
    \State compute the $k$-cap certificate of every projected shell
    \State return the corresponding original input witnesses together with $A_k$
\EndIf
\end{algorithmic}
\end{algorithm}

The two branches of Algorithm~\ref{alg:coreset} are exhaustive. If $R_G \le R_0$, Theorem~\ref{thm:bounded-scale} applies. Otherwise, $R_G\ge R_0$, so Theorem~\ref{thm:large} applies. Therefore the output $P_\eps\subseteq P$ satisfies
\[
\cost_P(C_{P_\eps})\le (1+\eps)\OPT_k(P)
\]
for every optimal continuous $k$-center solution $C_{P_\eps}$ for $P_\eps$.

The coreset-size bound of the two regimes is
\[
|P_\eps| = \left({k}/{\eps}\right)^{O(kD)}.
\]
Similarly, the overall construction time is
\[
O\bigl(nkD(k/\eps)^{O(kD)}\bigr).
\]
This proves \cref{thm:main}.

The important point is what these bounds do not contain. The coreset size is independent of $n$, and both the coreset size and the construction time are independent of the radius of the smallest hyperbolic ball containing $P$. This is the main difference from a direct grid of hyperbolic space, whose size would grow with the hyperbolic volume of the enclosing ball.


\section{A Lower Bound for Coresets}\label{sec:coreset-lower-bound}

We complement the radius-independent upper bound with a lower bound for the same coreset notion. The lower bound applies to unweighted input subsets and allows the centers of the reduced instance to range over the whole continuous hyperbolic space.

\begin{theorem}[Coreset-size lower bound]\label{thm:coreset-lower-bound}
For every fixed integer $D \ge 1$, there exist constants $\eps_0>0$ and $c_D>0$ such that the following holds. For every integer $k \ge 2$ and every $0<\eps \le \eps_0$, there exists a finite point set $P \subseteq \Hd$ such that every $\eps$-coreset $P_\eps \subseteq P$ satisfies
\[
    |P_\eps| \ge c_D k\eps^{-(D-1)/2}.
\]
In fact, every $\eps$-coreset of the constructed instance is equal to the whole input set $P$. 
Consequently, $|P| = \Omega\!\left(k\eps^{-(D-1)/2} \right)$.
\end{theorem}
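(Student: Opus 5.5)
The plan is to build an instance made of $k$ widely separated gadgets. Each gadget is arranged so that deleting any single point of it allows some exact optimum of the reduced instance to move one center away from the deleted point, leaving that point at distance more than $(1+\eps)\OPT_k(P)$. This forces $P_\eps=P$, and the size bound comes from the number of points in a gadget.

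\emph{Construction.} Fix a scale $R^*$ and place $k$ base points $o_0,o_1,\dots,o_{k-1}$ pairwise at hyperbolic distance much larger than $R^*$ (say $\ge 100R^*$). The gadgets are as follows.
\begin{enumerate}
\item Group $G_0$ consists of two points at distance $2R^*$, symmetric about $o_0$; it alone forces the optimal radius to be $R^*$.
\item For each $i\ge1$, group $G_i$ is a maximal $\theta$-separated set of directions, with $\theta=c\sqrt{\eps}$, placed on the hyperbolic sphere of radius $r=R^*/2$ around $o_i$. Such a set has $N=\Theta(\theta^{-(D-1)})=\Theta(\eps^{-(D-1)/2})$ points; for $D=1$ it is just the two points at distance $r$ from $o_i$.
\end{enumerate}
Since the groups are far apart, any center set of cost at most $3R^*$ must use exactly one center per group. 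Hence $\OPT_k(P)=R^*$, attained by the midpoint of $G_0$ together with $o_1,\dots,o_{k-1}$. The point count is $|P|=2+(k-1)N\ge c_D k\eps^{-(D-1)/2}$, using $k\ge2$.

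\emph{Each point of $G_0$ is necessary.} Suppose $P_\eps$ misses a point $q$ of $G_0$ but keeps the other point $q'$ (if it keeps neither, the argument is easier). Then $\OPT_k(P_\eps)\ge r>0$, and the only constraint on the $G_0$-center is that it lie within $\OPT_k(P_\eps)$ of $q'$. Placing that center on the far side of $q'$ gives another exact optimum of $P_\eps$ under which $q$ is at distance about $2R^*>(1+\eps)R^*$. So both points of $G_0$ lie in $P_\eps$, and then $\OPT_k(P_\eps)=R^*$ by the one-center-per-group argument.

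\emph{Each point of $G_i$, $i\ge1$, is necessary.} Suppose $p=(r,u)\in G_i$ is missing from $P_\eps$. Keep the other centers optimal and move the $G_i$-center to $c=(t,-u)$ around $o_i$. By Lemma~\ref{lem:polar-cosine-law},
\[
\cosh d_H(c,(r,\xi))=\cosh r\cosh t+\sinh r\sinh t\,\langle \xi,u\rangle .
\]
Every remaining point $(r,\xi)$ has $\langle\xi,u\rangle\le\cos\theta$. So choose $t$ to be the largest value with $\cosh r\cosh t+\sinh r\sinh t\cos\theta\le\cosh R^*$. Then $c$ is within $R^*$ of all of $P_\eps$, and the modified center set is still an exact optimum of $P_\eps$. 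The deleted point itself is at distance exactly $r+t$. It remains to show $r+t>(1+\eps)R^*$, which will be done by a second-order expansion in $\theta$. With $\theta=0$ the equation gives $t=R^*-r$, hence $r+t=R^*$. Decreasing $\cos\theta$ by about $\theta^2/2$ increases $t$ by roughly $\theta^2\,\sinh r\sinh t/\sinh R^*$ to first order. With $r=t\approx R^*/2$, this gain is $\Omega(\theta^2\min(R^*,1))$. Fixing $R^*=1$, the gain is $\Omega(\theta^2)=\Omega(c^2\eps)$, which exceeds $\eps R^*$ once the constant $c$ is chosen large enough. The same choice must also keep $\theta$ small, which is why the result is stated only for $\eps\le\eps_0$.

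\emph{Main obstacle.} I expect the delicate step to be this last quantitative computation. It has to give a uniform constant, derived from the exact cosine law, relating the separation $\theta$ to the gain $t-(R^*-r)$. At the same time the gadget spacing must be large enough that "one center per group" holds for every optimum of every candidate $P_\eps$, not just for the intended one.
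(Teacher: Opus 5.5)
Your gadgets and the key witness computation are essentially the paper's. You use an anchor pair at distance $2R^*$, $k-1$ far-apart families of $\Theta(\sqrt{\eps})$-separated directions on spheres of radius $R^*/2$, and, for a deleted witness $(r,u)$, a center on the antipodal ray $(t,-u)$. There is, however, a missing step. After showing that both points of $G_0$ are retained, you assert $\OPT_k(P_\eps)=R^*$ by the one-center-per-group argument. Your $G_i$ step relies on this to conclude that the modified center set, whose cost on $P_\eps$ is $R^*$, is an \emph{exact} optimum of $P_\eps$. But $\OPT_k(P_\eps)\ge R^*$ holds only if $P_\eps$ has a retained point in \emph{every} $G_j$ with $j\ge1$, including $G_i$ itself, so that pigeonhole applies to $k+1$ points pairwise at distance at least $2R^*$. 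Nothing you prove excludes a $P_\eps$ that misses some $G_j$ entirely. For such a set, the freed center lets you put one center on each point of $G_0$, so $\OPT_k(P_\eps)\le R^*/2$, and your modified center set is not optimal. The same unproved assumption underlies your claim $\OPT_k(P_\eps)\ge r$ in the $G_0$ step, which is false, for instance, when $|P_\eps|\le k$.

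The fix is the paper's mandatory-regions lemma, proved before the witness-by-witness argument. Suppose $P_\eps$ misses a point $q$ of $G_0$, or $P_\eps\cap G_j=\emptyset$ for some $j\ge1$. Then $\OPT_k(P_\eps)\le R^*/2$: place a center at each retained point of $G_0$ and at $o_\ell$ for each represented $G_\ell$. Take any optimum and discard centers serving no point. Every remaining center is within $R^*/2$ of a retained point, hence at distance at least $3R^*/2>(1+\eps)R^*$ from $q$, and much farther from every point of the missing $G_j$. So every $\eps$-coreset contains $G_0$ and meets every $G_j$, and only then does pigeonhole give $\OPT_k(P_\eps)=R^*$, making your modified center set an exact optimum. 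This also removes the need for the wrong claim in your $G_0$ step. You also do not need one center per group for every optimum; exhibiting one bad exact optimum suffices.

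For the quantitative step, you can avoid the first-order expansion. For $\cos\theta>0$ the map $t\mapsto\cosh r\cosh t+\sinh r\sinh t\cos\theta$ is increasing. So with $R^*=1$ and $r=1/2$ it suffices to check admissibility at $t=1/2+2\eps$, that is, $\cosh(1+2\eps)-\cosh 1\le\sinh(1/2)\sinh(1/2+2\eps)(1-\cos\theta)$. This follows from the mean value theorem and $1-\cos\theta\ge\theta^2/4$ for $0<\theta\le1$, once $\theta\ge c\sqrt{\eps}$ for a suitable absolute constant $c$. This is exactly the paper's explicit center $c_u=(1/2+2\eps,-u)$.
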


The construction has two ingredients. First, consider points $x_u=(1/2,u)$ whose directions form an $\alpha$-separated set on $\mathbb S^{D-1}$, where $\alpha=\Theta(\sqrt{\eps})$. For every direction $u$, place a query center
$c_u=(1/2+2\eps,-u)$ on the opposite ray. The hyperbolic law of cosines gives
\[
    \cosh\distH(x_v,c_u) = \cosh(1+2\eps) - \sinh(1/2)\sinh(1/2+2\eps) \bigl(1-\langle u,v\rangle\bigr).
\]
Consequently, $\distH(x_u,c_u)=1+2\eps$, whereas a sufficiently large constant in $\alpha=\Theta(\sqrt{\eps})$ guarantees $\distH(x_v,c_u) \le 1$ for every $v \ne u$.
Thus $x_u$ is the unique maximum witness for the center query $c_u$.
A spherical packing provides $\Omega(\eps^{-(D-1)/2})$ such witnesses.

Second, we take $k-1$ mutually distant copies $X_1,\ldots,X_{k-1}$ of this witness family and add a pair of anchor points at distance $2$.  The anchor pair and the center budget force the optimum radius to be exactly $1$. If a reduced input omits a witness $x_{i,u}\in X_i$, the corresponding query center $c_{i,u}$, together with the anchor midpoint and the canonical centers of the other copies, is an exact optimal $k$-center solution for the reduced input. Its cost on the omitted point $x_{i,u}$ is $1+2\eps>(1+\eps)\OPT_k(P)$. Hence every $\eps$-coreset must retain every witness. The complete argument, including the separation and anchor-forcing properties, is given in Appendix~\ref{app:coreset-lower-bound}.

The lower bound does not match the upper bound $(k/\eps)^{O(kD)}$, but it shows that linear dependence on $k$ and polynomial dependence on the angular resolution are unavoidable. Closing the remaining gap, particularly the dependence on $k$, remains an open problem.


\section{Conclusion}\label{sec:conclusion}

We constructed radius-independent coresets for the continuous $k$-center problem in fixed-dimensional hyperbolic space. The coreset is an input subset $P_\eps\subseteq P$, while an optimal solution for $P_\eps$ may still place its centers anywhere in $\Hd$. Thus the result reduces the number of input points without discretizing the search space of possible centers.

The main obstacle is the exponential volume growth of hyperbolic balls. A grid that is harmless in Euclidean space can become exponentially large in the enclosing radius of the input in hyperbolic space. Our construction avoids this dependence by using the farthest-first radius $R_G$ only as a reference scale. When $R_G \le R_0$, local metric comparison reduces the problem to a Euclidean grid argument in the Poincar\'e ball model. When $R_G \ge R_0$, we partition each farthest-first cluster into radial shells and project every shell onto a representative sphere.
The exact hyperbolic law of cosines shows that the intersection of a hyperbolic ball with that sphere is a spherical cap.
A recursive $k$-cap coverage certificate then selects the required input witnesses.
Combining the bounded and large regimes gives a coreset of size $\left({k}/{\eps}\right)^{O(kD)}$ and construction time $O\left( nkD \left({k}/{\eps}\right)^{O(kD)} \right).$
For fixed $D$, $k$, and $\eps$, the coreset has constant size and the construction runs in $O(n)$ time.
We also proved that, for every fixed $D \ge 1$, every $k \ge 2$, and all sufficiently small $\eps$, there are instances for which every $\eps$-coreset has size $\Omega\left(k\eps^{-(D-1)/2}\right)$. For the constructed instances, every input point is necessary.

Conceptually, the construction shows that hyperbolic coresets cannot be obtained by simply copying the Euclidean grid proof. One must separate the bounded-scale regime, where hyperbolic space behaves like Euclidean space, from the large-scale regime, where radial and angular effects dominate. The radial-shell cap-certificate method keeps input points that witness the maximum distance to an arbitrary continuous $k$-center configuration, up to the required additive error. These representatives play the role of occupied grid cells in the bounded-scale construction, but they are selected using spherical-cap coverage on representative spheres rather than an ambient grid.

Several questions remain. First, there remains a substantial gap between the linear dependence on $k$ in our lower bound and the exponential dependence on $k$ in the upper bound $(k/\eps)^{O(kD)}$. It would be valuable to determine the optimal dependence on $k$ and $\eps$, and in particular whether spherical-cap certificates admit a more compact construction. Second, it would be interesting to determine whether the intrinsic dimension of the representative sphere can be exploited to improve the dependence on $D$ in the cap-certificate bound. Third, the discrete $k$-center problem in hyperbolic space should be treated separately, since preserving input points is not equivalent to preserving the best candidate centers. Finally, it would be interesting to extend the radial-shell and spherical-cap certificate method to other clustering objectives defined by hyperbolic distance functions.

\bibliographystyle{plainurl}
\bibliography{hyper_k-center_coreset}

\newpage
\appendix

\section{Proof of Lemma~\ref{lem:gonzalez}}\label{app:proof-gonzalez}
\begin{proof}
The proof has three parts. First, the farthest-first radius is at least the continuous optimum because the anchors themselves are feasible continuous centers. Second, the standard discrete $k$-center guarantee and a simple
comparison between discrete and continuous centers give the upper bound. Third, the same traversal maintains the nearest-anchor partition in $O(kn)$ time.

Let $\OPT^{\mathrm{disc}}_k(P) = \min_{\substack{C\subseteq P\\ |C|\le k}} \cost_P(C)$ denote the optimum radius for the discrete $k$-center problem, where the centers are required to be input points.

First, since $A_k \subseteq P \subseteq \Hd$ and $|A_k| \le k$, the set $A_k$ is a feasible center set for the continuous $k$-center problem. Therefore $\OPT_k(P)\le \cost_P(A_k)=R_G$.

We next prove the upper bound. Gonzalez's farthest-first traversal is a $2$-approximation for the discrete $k$-center problem in any metric space. Hence $R_G=\cost_P(A_k)\le 2\OPT^{\mathrm{disc}}_k(P)$.
It remains to compare the discrete and continuous optima. We claim that 
\[
\OPT^{\mathrm{disc}}_k(P)\le 2\OPT_k(P).
\]

Let $C^\star = \{c_1,\ldots,c_m\} \subseteq \Hd$, with $m \le k$, be an optimal continuous $k$-center solution for $P$. Thus $\cost_P(C^\star) = \OPT_k(P)$. Assign each input point $p \in P$ to one of its nearest centers in $C^\star$. This gives clusters $Q_1,\ldots,Q_m$, some of which may be empty. For every nonempty cluster $Q_j$, choose one representative input point $q_j \in Q_j$, and define
\[
    Q=\{q_j: Q_j\neq\emptyset\}\subseteq P .
\]
Then $|Q| \le m \le k$, so $Q$ is a feasible discrete center set.

Now fix any input point $p \in P$, and suppose $p \in Q_j$. Since both $p$ and $q_j$ are assigned to $c_j$, we have $\distH(p,c_j)\le \OPT_k(P)$ and $\distH(q_j,c_j)\le \OPT_k(P)$.
By the triangle inequality,
\[
    \distH(p,q_j) \le \distH(p,c_j)+\distH(c_j,q_j) \le 2\OPT_k(P).
\]
Therefore every point of $P$ is within distance $2 \OPT_k(P)$ from some point of $Q$. Hence
\[
    \OPT^{\mathrm{disc}}_k(P) \le \cost_P(Q) \le 2\OPT_k(P).
\]
Combining this with the $2$-approximation guarantee of Gonzalez's traversal, we obtain
\[
    R_G \le 2\OPT^{\mathrm{disc}}_k(P) \le 4\OPT_k(P).
\]

Finally, we justify the running time. During Gonzalez's traversal, for each input point $p \in P$, we maintain its distance to the nearest selected anchor. Whenever a new anchor is selected, we scan all $n$ input points and update these nearest-anchor distances. Each scan takes $O(n)$ time, and the procedure is repeated for $k$ anchors. Thus the anchors $A_k$, the values $\distH(p,A_k)$ for all $p \in P$, and the nearest-anchor assignment can be computed in $O(kn)$ time. The radius $R_G=\max_{p\in P}\distH(p,A_k)$ is obtained from the maintained nearest-anchor distances within the same time bound.
\end{proof}


\section{Proofs for the Bounded-Scale Case}\label{app:bounded-scale}
We prove the bounded-scale guarantee in detail. Throughout this section, assume that $R_G \le R_0$, where $R_0>0$ is a fixed constant independent of $n$, $k$, and $\eps$. Recall that $P_i$ denotes the set of input points assigned to the anchor $a_i$, and that $\mu_{a_i}$ is a hyperbolic isometry satisfying $\mu_{a_i}(a_i)=O$.

The proof has four steps. First, each anchor cluster is moved to the origin and lies in a fixed hyperbolic ball. Second, inside this fixed ball the hyperbolic and Euclidean metrics are uniformly comparable. Third, a Euclidean grid of mesh $\Theta(\eps R_G)$ gives representatives within additive error $\eps R_G/4$. Finally, this additive error is converted to a $(1+\eps)$-approximation using $R_G \le 4\OPT_k(P)$.

\subparagraph{Cluster localization.}
First, by the nearest-anchor assignment, every point $p \in P_i$ satisfies $\distH(p,a_i)\le R_G$.
Since $\mu_{a_i}$ is an isometry and $\mu_{a_i}(a_i)=O$, we have
\[
    \distH\bigl(O,\mu_{a_i}(p)\bigr) = \distH\bigl(\mu_{a_i}(a_i),\mu_{a_i}(p)\bigr) = \distH(a_i,p) \le R_G.
\]
Thus
\[
    \mu_{a_i}(P_i)\subseteq \Ball(O,R_G)\subseteq \Ball(O,R_0).
\]

\subparagraph{Local metric comparison.}
We next compare the hyperbolic and Euclidean metrics inside the fixed ball $\Ball(O,R_0)$. Let $\rho_0=\tanh(R_0/2)$ be the Euclidean radius of $\Ball(O,R_0)$ in the Poincar\'e ball model, and set $L_0=\frac{2}{1-\rho_0^2}$. Since $R_0$ is fixed, $L_0$ is an absolute constant depending only on $R_0$.

\begin{lemma}[Local metric comparison]\label{lem:local-metric-comparison}
For any $x,y \in \Ball(O,R_0)$,
\[
    2\|x-y\|\le \distH(x,y)\le L_0\|x-y\|.
\]
\end{lemma}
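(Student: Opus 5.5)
Both inequalities come from the infinitesimal form of the Poincar\'e metric. The hyperbolic length element at a point $z$ is $ds=\frac{2\|dz\|}{1-\|z\|^2}$, and its conformal factor $\lambda(z)=\frac{2}{1-\|z\|^2}$ satisfies $2\le\lambda(z)\le L_0$ on the closed Euclidean ball $\|z\|\le\rho_0$. The hyperbolic ball $\Ball(O,R_0)$ is exactly this Euclidean ball, and it is convex. We compare $\distH(x,y)$ with the lengths of two specific curves.

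\textbf{Upper bound.} Take the Euclidean segment $\sigma(t)=x+t(y-x)$, $t\in[0,1]$. By convexity it stays inside the Euclidean ball of radius $\rho_0$, so $\lambda(\sigma(t))\le L_0$ along it. The hyperbolic distance is at most the hyperbolic length of any curve joining $x$ and $y$, hence
\[
\distH(x,y)\le\int_0^1\lambda(\sigma(t))\,\|y-x\|\,dt\le L_0\|x-y\|.
\]

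\textbf{Lower bound.} Here I would not use the infimum over curves, because the geodesic might a priori leave the ball. Instead I would use that every curve in $\BD$ has $\lambda\ge2$ everywhere, so its hyperbolic length is at least twice its Euclidean length, which is at least $2\|x-y\|$. Since $\distH$ is the infimum of hyperbolic lengths over all curves in $\BD$, this gives $\distH(x,y)\ge2\|x-y\|$. The lower bound therefore holds on all of $\BD$, not only on $\Ball(O,R_0)$.

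\textbf{Main obstacle.} The one point needing care is that the paper defines $\distH$ by the closed $\arcosh$ formula, not as a path-length metric. I would either cite the standard fact that this formula is the Riemannian distance of the metric $\lambda(z)\|dz\|$ (e.g.\ from Ratcliffe), or argue directly from the formula.

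\textbf{Direct route for the lower bound.} Let $u=\frac{2\|x-y\|^2}{(1-\|x\|^2)(1-\|y\|^2)}$. Since $(1-\|x\|^2)(1-\|y\|^2)\le1$, we have $u\ge2\|x-y\|^2$. Using $\cosh(2a)=1+2\sinh^2 a\ge 1+2a^2$ with $a=\|x-y\|$, we get $\cosh(2\|x-y\|)\le 1+u$ whenever $\sinh^2(\|x-y\|)\le\|x-y\|^2$. That condition is false, so this route fails as written and would need a sharper estimate. I therefore prefer the path-length argument above.
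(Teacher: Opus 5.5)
Your proof is correct and follows the paper's argument: the upper bound integrates the conformal factor $2/(1-\|z\|^2)\le L_0$ along the Euclidean segment, which stays in the convex Euclidean ball of radius $\rho_0$, and the lower bound uses $2/(1-\|z\|^2)\ge 2$, which holds on all of $\BD$. The paper applies that lower estimate along a geodesic and you apply it to an arbitrary curve before taking the infimum; this amounts to the same thing, so your concern about the geodesic leaving the ball does not arise, and the abandoned ``direct route'' paragraph can simply be deleted.
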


\begin{proof}
In the Poincar\'e ball model, the hyperbolic length element is
\[
    ds_H=\frac{2\|dz\|}{1-\|z\|^2}.
\]
Let $\gamma$ be a hyperbolic geodesic joining $x$ and $y$. Since $1-\|z\|^2\le 1$, we have
\[
    \distH(x,y) = \int_\gamma \frac{2\|dz\|}{1-\|z\|^2} \ge 2\int_\gamma \|dz\| \ge 2\|x-y\|.
\]
This proves the lower bound.

For the upper bound, note that $\Ball(O,R_0)$ is the Euclidean ball centered at $O$ with radius $\rho_0=\tanh(R_0/2)$.  Hence the Euclidean line segment from $x$ to $y$ lies inside this ball. Since hyperbolic distance is at most the hyperbolic length of any curve joining $x$ and $y$,
\[
    \distH(x,y) \le \int_{\overline{xy}}\frac{2\|dz\|}{1-\|z\|^2} \le \frac{2}{1-\rho_0^2}\int_{\overline{xy}}\|dz\| = L_0\|x-y\|.
\]
\end{proof}

\subparagraph{Grid construction.}
If $R_G=0$, then every input point has distance zero from some anchor. In this case the set of distinct anchors is an exact coreset.  Hence, in the rest of the proof, assume that $0<R_G\le R_0$.
Choose the Euclidean grid side length $h=\frac{\eps R_G}{4L_0\sqrt D}$, and let $G_h=h\mathbb{Z}^D$ be the regular Euclidean grid of side length $h$. Boundary points are assigned to grid cells according to a fixed tie-breaking rule.

\begin{lemma}[Grid-cell diameter]\label{lem:grid-cell-diameter}
Let $x,y \in \Ball(O,R_0)$ lie in the same grid cell of $G_h$. Then $\distH(x,y)\le {\eps R_G}/{4}$.
\end{lemma}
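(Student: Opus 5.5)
The approach is to combine the Euclidean diameter of a grid cell with the upper comparison in Lemma~\ref{lem:local-metric-comparison}. Let $x,y\in\Ball(O,R_0)$ lie in the same cell of $G_h$.

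First, a closed axis-parallel cube of side length $h$ in $\R^D$ has Euclidean diameter $h\sqrt D$. The tie-breaking rule assigns each boundary point to a single cell, and every point assigned to a given cell still lies in the closed cube, so $\|x-y\|\le h\sqrt D$.

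Second, both points lie in $\Ball(O,R_0)$, and this set is the Euclidean ball of radius $\rho_0$ in the Poincar\'e model. The upper inequality of Lemma~\ref{lem:local-metric-comparison} therefore applies and gives
\[
\distH(x,y)\le L_0\|x-y\|\le L_0\,h\sqrt D = L_0\sqrt D\cdot\frac{\eps R_G}{4L_0\sqrt D}=\frac{\eps R_G}{4}.
\]

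The main subtlety is that the cell itself need not be contained in $\Ball(O,R_0)$. The argument avoids this issue because Lemma~\ref{lem:local-metric-comparison} needs only the two endpoints $x,y$ to lie in the ball. Its proof uses the Euclidean segment $\overline{xy}$, which stays inside the ball by convexity, so no assumption about the rest of the cell is required. Apart from this, the argument is a direct calculation.
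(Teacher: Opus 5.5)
Your proof is correct and matches the paper's argument: bound the Euclidean distance within a cell by $h\sqrt D$, apply the upper inequality $\distH(x,y)\le L_0\|x-y\|$ from Lemma~\ref{lem:local-metric-comparison}, and substitute $h=\eps R_G/(4L_0\sqrt D)$. Your remark that only the endpoints need to lie in the ball, because the segment $\overline{xy}$ stays inside it by convexity, is exactly what the paper's proof of the comparison lemma relies on.
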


\begin{proof}
If $x$ and $y$ lie in the same grid cell, then each coordinate differs by at most $h$. Therefore $\|x-y\|\le h\sqrt D$. By \cref{lem:local-metric-comparison}, $\distH(x,y) \le L_0\|x-y\| \le L_0h\sqrt D = {\eps R_G}/{4}$.
\end{proof}

For each cluster $P_i$, apply $\mu_{a_i}$ and partition $\mu_{a_i}(P_i)$ by the grid $G_h$. For every occupied grid cell, retain one original input point whose image lies in that cell. For the cell containing the origin, retain the anchor $a_i$. Let $P_{\eps,i}\subseteq P_i$ be the set of retained points from cluster $P_i$, and define $P_\eps=\bigcup_{i=1}^k P_{\eps,i}$.

\subparagraph{Correctness.}
We first prove the additive transfer inequality for the bounded-scale representative set. For every $p \in P_i$, let
$\pi(p) \in P_{\eps,i}$ be the retained original input point whose image lies in the same grid cell as $\mu_{a_i}(p)$. By \cref{lem:grid-cell-diameter} and the fact that $\mu_{a_i}$ is an isometry,
\[
    \distH(p,\pi(p)) = \distH\bigl(\mu_{a_i}(p),\mu_{a_i}(\pi(p))\bigr) \le \frac{\eps R_G}{4}.
\]
Let $C \subseteq \Hd$ be any center set with $|C| \le k$. For every $p \in P$, the triangle inequality gives
\[
    \distH(p,C) \le \distH(p,\pi(p))+\distH(\pi(p),C) \le \frac{\eps R_G}{4}+\cost_{P_\eps}(C).
\]
Taking the maximum over all $p \in P$, we obtain
\[
    \cost_P(C)\le \cost_{P_\eps}(C)+\frac{\eps R_G}{4}. \tag{1}
\]

Let $C^\star$ be an optimal continuous $k$-center solution for $P$, and let $C_{P_\eps}$ be an optimal continuous $k$-center solution for $P_\eps$. Since $P_\eps \subseteq P$, we have $\cost_{P_\eps}(C^\star)\le \cost_P(C^\star)=\OPT_k(P)$.
By the optimality of $C_{P_\eps}$ on $P_\eps$, $\cost_{P_\eps}(C_{P_\eps}) \le \cost_{P_\eps}(C^\star) \le \OPT_k(P)$.
Applying inequality $(1)$ with $C=C_{P_\eps}$, we get $\cost_P(C_{P_\eps}) \le \OPT_k(P)+\frac{\eps R_G}{4}$.
By \cref{lem:gonzalez}, $R_G \le 4\OPT_k(P)$. Hence $\frac{\eps R_G}{4}\le \eps\OPT_k(P)$.
Therefore $\cost_P(C_{P_\eps}) \le (1+\eps)\OPT_k(P)$.

\subparagraph{Size bound.}
Fix a cluster $P_i$. After moving $a_i$ to the origin, the transformed cluster $\mu_{a_i}(P_i)$ is contained in $\Ball(O,R_G)$. In the Poincar\'e ball model, this hyperbolic ball is the Euclidean ball centered at $O$ with radius $\rho_G=\tanh(R_G/2)$.
Thus $\mu_{a_i}(P_i)$ is contained in a Euclidean cube of side length $2\rho_G$. The number of grid cells of side length $h$ intersecting this cube is $O\left(\left(1+{\rho_G}/{h}\right)^D\right)$.
Since the construction stores at most one input point from each occupied cell,
\[
    |P_{\eps,i}| = O\left(\left(1+\frac{\rho_G}{h}\right)^D\right).
\]
Using $h=\eps R_G/(4L_0\sqrt D)$, we get $\frac{\rho_G}{h} = \frac{4L_0\sqrt D}{\eps}\cdot \frac{\rho_G}{R_G}$.
Since $\rho_G=\tanh(R_G/2)\le R_G/2$, we have $\frac{\rho_G}{h} \le \frac{2L_0\sqrt D}{\eps}$.
Because $D$ is fixed and $L_0$ is a constant depending only on $R_0$, it follows that
\[
    |P_{\eps,i}|=O\!\left(\frac{1}{\eps^D}\right).
\]
Summing over all $k$ clusters gives
\[
    |P_\eps| \le \sum_{i=1}^k |P_{\eps,i}| = O\left(\frac{k}{\eps^D}\right).
\]

\subparagraph{Construction time.}
By \cref{lem:gonzalez}, the anchor set $A_k$, the radius $R_G$, and the nearest-anchor partition can be computed in $O(kn)$ time. For fixed $D$, applying $\mu_{a_i}$ to one point and finding the grid cell containing its image take $O(1)$ time. Thus scanning all points in all clusters takes $O(n)$ time.

The number of relevant grid cells over all clusters is $O\left({k}/{\eps^D}\right)$, so the total time for initializing and maintaining the occupied grid cells is $O\left(n+{k}/{\eps^D}\right)$.
Therefore the total construction time is
\[
    O\left(kn+n+\frac{k}{\eps^D}\right).
\]
This proves \cref{thm:bounded-scale}.


\section{Proofs for the Large-Scale Construction}\label{app:nonbounded}

\subsection{Ball--sphere intersections and cap growth}\label{app:cap-growth}
Fix a representative radius $\bar r>0$, and write $c=(s,\zeta)$. Suppose first that $s>0$. By the hyperbolic law of cosines,
\[
    \cosh d_H\bigl((\bar r,\xi),c\bigr) = \cosh(\bar r-s) + \sinh\bar r\,\sinh s \bigl(1-\langle\xi,\zeta\rangle\bigr).
\]
Since $\|\xi-\zeta\|_2^2 = 2\bigl(1-\langle\xi,\zeta\rangle\bigr)$, the inequality $d_H\bigl((\bar r,\xi),c\bigr)\le t$ is equivalent to 
\[
    \|\xi-\zeta\|_2^2 \le \frac{2\bigl(\cosh t-\cosh(\bar r-s)\bigr)}{\sinh\bar r\,\sinh s}.
\]
This proves the formula
\[
    \rho_c(t;\bar r)^2 = \frac{2\bigl(\cosh t-\cosh(\bar r-s)\bigr)}{\sinh\bar r\,\sinh s}.
\]
If $t<|\bar r-s|$, then the numerator is negative and the sublevel set is empty. If $\rho_c(t;\bar r) \ge 2$, then every point of $\mathbb S^{D-1}$ belongs to the cap, since the Euclidean diameter of the unit sphere is $2$.

If $s=0$, then $d_H\bigl((\bar r,\xi),0\bigr)=\bar r$ for every $\xi$. Hence the sublevel set is empty when $t<\bar r$ and is the whole sphere when $t \ge \bar r$.

We next quantify how a cap grows when the distance level is increased.

\begin{lemma}\label{lem:cap-growth}
Let $c=(s,\zeta)$, where $s>0$, and suppose that the cap at level $t$ is nonempty. Then, for every $h \ge 0$, $\rho_c(t+h;\bar r) \ge e^{h/2}\rho_c(t;\bar r)$.
\end{lemma}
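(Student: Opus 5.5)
The plan is to compare the squared radii directly through the explicit formula
\[
\rho_c(t;\bar r)^2=\frac{2\bigl(\cosh t-\cosh a\bigr)}{\sinh\bar r\,\sinh s},\qquad a:=|\bar r-s|,
\]
which is symmetric in $\bar r-s$ because $\cosh$ is even. The positive factor $2/(\sinh\bar r\sinh s)$ is the same at levels $t$ and $t+h$, so it cancels. Squaring both sides, it therefore suffices to show
\[
\cosh(t+h)-\cosh a\;\ge\; e^{h}\bigl(\cosh t-\cosh a\bigr).
\]
Here I read "the cap at level $t$ is nonempty" as $\cosh t\ge\cosh a$, that is, $t\ge a$, so that $\rho_c(t;\bar r)$ is a well-defined nonnegative real number. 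If $\rho_c(t;\bar r)=0$ the claim is trivial, since $\cosh(t+h)\ge\cosh t\ge\cosh a$.

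The key step is an exact computation of $\cosh(t+h)-e^{h}\cosh t$. Expanding in exponentials, the $e^{t+h}/2$ terms cancel and one gets
\[
\cosh(t+h)-e^{h}\cosh t=-e^{-t}\sinh h .
\]
The desired inequality therefore becomes $(e^{h}-1)\cosh a\ge e^{-t}\sinh h$. Write $\sinh h=(e^{h}-1)(1+e^{-h})/2$. If $h=0$ there is nothing to prove. If $h>0$, dividing by $e^{h}-1>0$ reduces the claim to
\[
\cosh a\;\ge\; e^{-t}\,\frac{1+e^{-h}}{2}.
\]
This holds because the right-hand side is at most $e^{-t}\le 1$ (using $t\ge0$ and $h\ge 0$), while $\cosh a\ge 1$. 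Taking square roots then gives $\rho_c(t+h;\bar r)\ge e^{h/2}\rho_c(t;\bar r)$.

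I do not expect a genuine obstacle; the only care needed is bookkeeping. First, the lemma concerns the formula $\rho_c$ rather than the truncated cap, so the case $\rho_c\ge 2$, where the cap is the whole sphere, needs no separate treatment. Second, nonemptiness is used precisely so that $\cosh t-\cosh a\ge 0$, which makes the multiplication by $e^{h}$ and the square roots legitimate. The one nonobvious point is spotting the cancellation $\cosh(t+h)-e^h\cosh t=-e^{-t}\sinh h$, after which everything follows from $\cosh a\ge 1\ge e^{-t}$.
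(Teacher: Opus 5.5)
Your proposal is correct and is essentially the paper's proof. Both reduce the claim to $\cosh(t+h)-\cosh(\bar r-s)\ge e^{h}\bigl(\cosh t-\cosh(\bar r-s)\bigr)$ via the identity $\cosh(t+h)-e^h\cosh t=-e^{-t}\sinh h$, and both close it with $\cosh(\bar r-s)\ge 1\ge e^{-t}$; the only difference is cosmetic, since the paper finishes with $(e^h-1)-\sinh h=\cosh h-1\ge 0$ where you factor $\sinh h=(e^h-1)(1+e^{-h})/2$ and divide by $e^h-1$.
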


\begin{proof}
Put $a:=\cosh(\bar r-s)$. Since the cap at level $t$ is nonempty, $t \ge |\bar r-s|$, and hence $\cosh t\ge a$. Moreover, $a \ge 1$. We compute
\[
    \cosh(t+h)-a-e^h(\cosh t-a) = (e^h-1)a-e^{-t}\sinh h \ge (e^h-1)-\sinh h = \cosh h-1 \ge 0.
\]
Therefore, $\cosh(t+h)-a \ge e^h(\cosh t-a)$.
Multiplying by the positive factor ${2}/{\sinh\bar r\,\sinh s}$ gives $\rho_c(t+h;\bar r)^2 \ge e^h\rho_c(t;\bar r)^2$.
Taking square roots proves $\rho_c(t+h;\bar r) \ge e^{h/2}\rho_c(t;\bar r)$.
\end{proof}

The degenerate cases require no separate growth estimate. An empty cap covers no witness, while a cap equal to the whole sphere remains the whole sphere at every larger distance level.


\subsection{Proof of the approximate $k$-cap coverage certificate}\label{app:proof-cap-certificate}
We first prove a multiplicative coverage certificate for Euclidean balls. The recursive idea is a net-based version of the standard multiplicative coreset construction for Euclidean $k$-center described in \cite[Theorem~6.1]{agarwal2005geometric}.

\begin{lemma}[Euclidean multiplicative ball certificate]\label{lem:euclidean-multiplicative-certificate}
Let $Y \subseteq \mathbb R^d$ be finite, let $m\ge1$, and let $0<\gamma\le1$. There exists a subset $Q_m(Y,\gamma)\subseteq Y$ of size $|Q_m(Y,\gamma)| \le m! \left({A}/{\gamma}\right)^{md}$, where $A>1$ is a universal constant, with the following property.

For every integer $\ell$ with $1\le\ell\le m$, every choice of centers $z_1,\ldots,z_\ell \in \mathbb R^d$, and every choice of radii $r_1,\ldots,r_\ell \ge 0$, $Q_m(Y,\gamma) \subseteq \bigcup_{j=1}^{\ell}B_2(z_j,r_j)$ implies $Y \subseteq \bigcup_{j=1}^{\ell} B_2\bigl(z_j,(1+\gamma)r_j\bigr)$.

Furthermore, $Q_m(Y,\gamma)$ can be computed in $O\left(|Y|\,md\,m!\left({A}/{\gamma}\right)^{md} \right)$ time.
\end{lemma}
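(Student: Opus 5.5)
The plan is induction on $m$, with the recursion $|Q_m|\le m\,(A/\gamma)^d\,|Q_{m-1}|$, which unrolls to the bound $m!\,(A/\gamma)^{md}$. This follows the recursive scheme of \cite[Theorem~6.1]{agarwal2005geometric}, with a grid in place of their net.

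\textbf{Base case $m=1$.} Pick any $y_0\in Y$, let $y_1$ be a farthest point of $Y$ from $y_0$, and put $\delta:=\|y_0-y_1\|$. Overlay a grid of side $\gamma\delta/(4\sqrt d)$ on the cube $[-\delta,\delta]^d+y_0$, and keep $y_0$, $y_1$ and one point of $Y$ from each occupied cell. This gives $O\bigl((8\sqrt d/\gamma)^d\bigr)\le (A/\gamma)^d$ points. Now suppose $Q_1\subseteq B_2(z,r)$. Since $y_0,y_1\in Q_1$, we get $r\ge\delta/2$. Every $y\in Y$ lies within $\gamma\delta/4\le\gamma r/2$ of a kept point, so $y\in B_2\bigl(z,(1+\gamma)r\bigr)$.

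\textbf{Inductive step.} Run farthest-first on $Y$ for $m+1$ steps. This gives points $g_0,\ldots,g_m$ that are pairwise at distance $\ge\hat r$, where $\hat r$ is the $m$-center farthest-first radius, and $Y\subseteq\bigcup_{i<m}B_2(g_i,\hat r)$. Put $g_0,\ldots,g_m$ into $Q_m$. Then place a grid of side $\gamma\hat r/(8\sqrt d)$ over the $m$ cubes around $g_0,\ldots,g_{m-1}$ and keep one representative per occupied cell; there are at most $m(A'/\gamma)^d$ cells. The key consequence of keeping the $g_i$: if $Q_m\subseteq\bigcup_{j\le\ell}B_2(z_j,r_j)$, then either $\ell<m+1$ balls contain $m+1$ points that are pairwise $\hat r$-separated, or some ball holds two of them. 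In both cases the largest radius satisfies $r_{\max}\ge\hat r/2$. Enlarging that ball by $\gamma r_{\max}\ge\gamma\hat r/2$ absorbs the grid error, so every cell whose representative lies in the large ball is fully covered.

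\textbf{Handling the remaining points.} The points not yet covered are those in cells whose representative is missed by the large ball. These must be handled by the other $\le m-1$ balls, each of which can be small relative to its own radius. For this, I would attach to each occupied cell $\sigma$ the recursive certificate $Q_{m-1}(Y_\sigma,\gamma)$, and add all these to $Q_m$. Here $Y_\sigma$ is the set of points of $Y$ in cells whose representative is not within $\hat r/2$ of $\sigma$, that is, the points that a large ball containing $\sigma$ cannot certify cheaply. At query time, choose $\sigma$ to be the cell of a representative that lies in the large ball. Then apply the induction hypothesis to $Y_\sigma$ with the remaining $\ell-1$ balls, and check that the points of $Y\setminus Y_\sigma$ are covered by the enlarged large ball.

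\textbf{Main obstacle.} I expect the main obstacle to be exactly this last step. The portion of $Y$ that the large ball certifies depends on the query, whereas the recursion must be indexed by a data-determined object, namely the single cell $\sigma$. So the definition of $Y_\sigma$ has to satisfy two requirements at once. First, every point of $Y\setminus Y_\sigma$ must be covered by $(1+\gamma)$ times the large ball. Second, $Y_\sigma$ must indeed be covered by the other balls. I would try a definition based on distance to $\sigma$ measured in units of $\hat r$, and if needed a second pigeonhole argument on which $g_i$ the large ball contains. If this fails, I would fall back on AHV's exact argument, where the recursion is on the cluster farthest from the large ball.

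\textbf{Running time.} Each level scans $Y$ once per cell in $O(|Y|d)$ time, and the sizes multiply across the $m$ levels. This yields the stated bound $O\bigl(|Y|\,md\,m!(A/\gamma)^{md}\bigr)$.
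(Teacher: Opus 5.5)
Your base case is correct, and so is the pigeonhole bound $r_{\max}\ge\hat r/2$ from the $m+1$ farthest-first points. But the step you flag as the main obstacle is a genuine gap, and the definition of $Y_\sigma$ you propose fails both of your requirements.

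The first requirement fails because the enlargement only buys an additive $\gamma r_{\max}$, which can be as small as $\gamma\hat r/2$. Meanwhile $Y\setminus Y_\sigma$ contains cells whose representatives are up to $\hat r/2$ away from that of $\sigma$. If the representative of $\sigma$ lies on the boundary of the large ball and $r_{\max}\approx\hat r/2$, such a point can be at distance about $2r_{\max}$ from the center, well outside the $(1+\gamma)$-enlargement.

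The second requirement fails because the large ball may have radius far exceeding $\hat r$. It may then contain points of $Q_{m-1}(Y_\sigma,\gamma)$ that no other ball contains. In that case the hypothesis you need, that the remaining $\ell-1$ balls cover $Q_{m-1}(Y_\sigma,\gamma)$, is simply unavailable, so induction cannot be invoked.

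The $(1+\gamma)$-enlargement can only absorb locality at scale $\gamma\hat r$, not $\hat r$. There are many regions of that size, so no single query-selected index $\sigma$ can carry the recursion.

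The missing idea, which is essentially how the paper argues, is to recurse on \emph{every} cell, using that cell's own points. The paper uses a maximal $\tau$-separated net $N$ with $\tau=\Theta(\gamma\,r_m^*)$ where you use a grid. Let $Y_u$ be the points of $Y$ assigned to $u\in N$, and set $Q_m(Y,\gamma)=N\cup\bigcup_{u\in N}Q_{m-1}(Y_u,\gamma)$. At query time, make the case split separately for each cell:
\begin{itemize}
\item If the largest ball contains a point of $Y_u$, its $(1+\gamma)$-enlargement contains all of $Y_u$, because $\operatorname{diam}(Y_u)\le 2\tau\le\gamma R$.
\item If it contains no point of $Y_u$, then every point of $Q_{m-1}(Y_u,\gamma)\subseteq Y_u$ lies in one of the at most $\ell-1\le m-1$ other balls. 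The induction hypothesis then covers $Y_u$.
\end{itemize}
Since the cells partition $Y$, this completes the coverage argument. It also gives exactly the size recursion $S_m\le|N|(1+S_{m-1})$ you were aiming for, and the partition at each recursion depth gives the running time.

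With your grid and your bound $r_{\max}\ge\hat r/2$, this per-cell recursion goes through verbatim, with one caveat. A grid of side $\Theta(\gamma\hat r/\sqrt d)$ has $(O(\sqrt d)/\gamma)^d$ cells per cluster, so your $A$ would depend on $d$ rather than being universal as the statement asserts. The paper's volume-packing bound $|N|\le m(129/\gamma)^d$ avoids the $\sqrt d$ factor.
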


\begin{proof}
We give a recursive construction. 

For a finite set $Z \subseteq \mathbb R^d$, define its continuous $m$-center radius by
\[
    r_m^*(Z) := \min_{\substack{F\subseteq\mathbb R^d\\ |F|\le m}} \max_{z\in Z}d_2(z,F).
\]
If $|Z|\le m$, we simply return $Z$.

Otherwise, run Gonzalez's farthest-first algorithm with $m$ centers on $Z$, and denote its covering radius by $\widetilde r_m(Z)$. The usual discrete-versus-continuous comparison gives $r_m^*(Z) \le \widetilde r_m(Z) \le 4r_m^*(Z)$.
Set $\tau := {\gamma\widetilde r_m(Z)}/{64}$. Consequently, ${\gamma r_m^*(Z)}/{64} \le \tau \le {\gamma r_m^*(Z)}/{16}$.

Construct a maximal $\tau$-separated subset $N \subseteq Z$. Thus, $\|u-v\|_2>\tau$ for all distinct $u,v \in N$, and maximality implies that every point of $Z$ is within Euclidean distance at most $\tau$ from a point of $N$.

Assign every point of $Z$ to a nearest point of $N$, breaking ties arbitrarily. For $u \in N$, let $Z_u$ be the resulting cell. Then $Z=\mathbin{\dot\bigcup}_{u\in N}Z_u$ and $\text{diam}_2(Z_u) \le 2 \tau$.

We first bound the number of cells. Let $r^*=r_m^*(Z)$. The set $Z$, and hence $N$, is covered by $m$ Euclidean balls of radius $r^*$. Inside any one of these balls, the open balls of radius $\tau/2$ centered at points of $N$ are pairwise disjoint and are contained in a ball of radius $r^*+\tau/2$. A volume comparison gives
\[
    |N| \le m\left(\frac{r^*+\tau/2}{\tau/2} \right)^d = m\left(1+\frac{2r^*}{\tau}\right)^d \le m\left(1+\frac{128}{\gamma}\right)^d \le m\left(\frac{129}{\gamma}\right)^d.
\]
The certificate is now defined recursively. For $m=1$, set $Q_1(Z,\gamma):=N$. For $m \ge 2$, set 
\[
Q_m(Z,\gamma) := N \cup \bigcup_{u\in N} Q_{m-1}(Z_u,\gamma).
\]

We prove the coverage property by induction on $m$. Let $\mathcal B = \{B_2(z_j,r_j):j=1,\ldots,\ell\}$, $\ell\le m$, be a family covering $Q_m(Z,\gamma)$, and put 
\[
R:=\max_{1\le j\le\ell}r_j.
\]

Since $N\subseteq Q_m(Z,\gamma)$, the family $\mathcal B$ covers $N$. Every point of $Z$ lies within distance $\tau$ from $N$. It follows that $Z$ is covered by the $\ell$ balls with the same centers and common radius $R+\tau$. Since $\ell\le m$, $r^*\le R+\tau$.
Therefore,
\[
    R \ge r^*-\tau \ge \left(1-\frac{\gamma}{16}\right)r^* \ge \frac{15}{16}r^*.
\]
In particular, $2\tau \le {\gamma r^*}/{8} \le \gamma R$.
Fix a cell $Z_u$ and let 
\[
    J_u := \left\{j \in \{1,\ldots,\ell\}: B_2(z_j,r_j) \cap Z_u \ne \emptyset \right\}.
\]

Suppose first that every ball of $\mathcal B$ meets $Z_u$, so that $|J_u|=\ell$. Let $j^*$ be an index satisfying $r_{j^*}=R$. Choose $y_0 \in B_2(z_{j^*},R) \cap Z_u$. For every $y \in Z_u$,
\[
    \|y-z_{j^*}\|_2 \le \|y-y_0\|_2+\|y_0-z_{j^*}\|_2 \le 2\tau+R \le (1+\gamma)R.
\]
Thus the expansion of the largest ball covers the whole cell $Z_u$.

Suppose instead that not every ball meets $Z_u$. Then $|J_u| \le \ell-1 \le m-1$. Because $Q_{m-1}(Z_u,\gamma)\subseteq Z_u$ and the full family $\mathcal B$ covers $Q_m(Z,\gamma)$, the balls indexed by $J_u$ cover $Q_{m-1}(Z_u,\gamma)$. By the induction hypothesis, $Z_u \subseteq \bigcup_{j\in J_u} B_2\bigl(z_j,(1+\gamma)r_j\bigr)$.

For $m=1$, only the first case can occur because the unique covering ball contains the net point $u \in Z_u$ and therefore meets every cell. Hence the argument also establishes the induction base. Since the cells partition $Z$, the expanded balls cover all of $Z$.

It remains to bound the size. Put $G_d(\gamma) := \left({129}/{\gamma}\right)^d$ and let $S_m$ be the maximum possible size of $Q_m(Z,\gamma)$. The packing bound gives $S_1 \le G_d(\gamma)$ and, for $m \ge 2$, 
\[
    S_m \le mG_d(\gamma)\bigl(1+S_{m-1}\bigr) \le 2mG_d(\gamma)S_{m-1}.
\]
Induction therefore gives
\[
    S_m \le m!\,2^{m-1}G_d(\gamma)^m \le m! \left(\frac{258}{\gamma}\right)^{md}.
\]
Thus one may take $A=258$.

Finally, Gonzalez's traversal on a subset $Z'$ takes $O(|Z'|md)$ time. A maximal $\tau$-net can be constructed greedily
in $O\bigl(|Z'|\,|N|\,d\bigr) = O\left(|Z'|\,md \left({129}/{\gamma}\right)^d \right)$ time. At every fixed recursion depth, the recursive subsets form a partition of the original input. Summing over at most $m$ recursion
levels gives $O\left(|Y|\,m^2d \left({129}/{\gamma}\right)^d \right)$ time, which is bounded by
\[
    O\left(|Y|\,md\, m!\left(\frac{A}{\gamma}\right)^{md} \right).
\]
This proves the algorithmic claim.
\end{proof}


\subsection{Proof of Lemma~\ref{lem:shell-witness}}\label{app:shell-witness}
\begin{proof}
Let $B=B(i,j)$ be a nonempty radial shell, and put $X_B = \left\{\xi_i(p): p\in B \right\}$.
Let $\overline W_B = W_k(X_B,\gamma_G)$.
For every direction of $\overline W_B$, retain an arbitrary original point of $B$ having that direction. The resulting subset is $W(B)$. By Lemma~\ref{lem:algorithmic-cap-certificate}, $|W(B)| \le T_{k,D}(\gamma_G)$, and it can be computed in $O\bigl(|B|\,kD\,T_{k,D}(\gamma_G) \bigr)$ time.

Fix a center set $C \subseteq \Hd$ satisfying $|C| \le k$. If $C=\emptyset$, then both sides of the claimed inequality are $+\infty$ under our convention $d_H(p,\emptyset)=+\infty$, so the claim is immediate. Hence assume that $C\ne\emptyset$, and put
\[
    T := \max_{q\in W(B)}d_H(q,C).
\]
For every $q \in W(B)$, radial projection gives $d_H\bigl(q,\pi_j(q)\bigr) \le {\Delta}/{2}$.
Therefore,
\[
    d_H\bigl(\pi_j(q),C\bigr) \le d_H\bigl(\pi_j(q),q\bigr)+d_H(q,C) \le T+\frac{\Delta}{2}.
\]
Set $t:=T+{\Delta}/{2}$. It follows that all directions corresponding to the projected witnesses are covered, at level $t$, by the at most $k$ spherical caps induced by the centers of $C$.

By the definition of $\gamma_G$, $1+\gamma_G \le \exp\left({\Delta}/{2}\right)$.
For every nondegenerate cap, Lemma~\ref{lem:cap-growth} with $h=\Delta$ gives $\rho_c(t+\Delta;\bar r_j) \ge e^{\Delta/2}\rho_c(t;\bar r_j) \ge (1+\gamma_G)\rho_c(t;\bar r_j)$.
Thus the multiplicatively enlarged cap appearing in Lemma~\ref{lem:algorithmic-cap-certificate} is contained in the
sublevel cap at level $t+\Delta$. Empty caps may be discarded, and whole-sphere caps make the conclusion immediate.

The cap certificate therefore implies that every direction in $X_B$ is covered at level $t+\Delta$. Hence, for every $p \in B$, 
\[
d_H\bigl(\pi_j(p),C\bigr) \le t+\Delta = T+{3\Delta}/{2}.
\]
Using radial projection once more,
\[
    d_H(p,C) \le d_H\bigl(p,\pi_j(p)\bigr) + d_H\bigl(\pi_j(p),C\bigr) \le \frac{\Delta}{2} + T+\frac{3\Delta}{2}= T+2\Delta.
\]
Since $2\Delta = {\eps R_G}/{4}$, we obtain
\[
    d_H(p,C) \le \max_{q\in W(B)}d_H(q,C) + \frac{\eps R_G}{4}.
\]
Taking the maximum over all $p \in B$ proves the claim.
\end{proof}


\subsection{Proof of Theorem~\ref{thm:large}}\label{app:nonbounded-coreset}
\begin{proof}
The construction retains all anchors and the witness set of every nonempty radial shell $P_\eps = A_k \cup \bigcup_{B}W(B)$.

We first prove the additive transfer property. Let $C \subseteq \Hd$ satisfy $|C| \le k$. If $p \in A_k$, then $p \in P_\eps$, and hence $d_H(p,C) \le \cost_{P_\eps}(C)$.
Otherwise, $p$ belongs to a unique nonempty radial shell $B$.
Lemma~\ref{lem:shell-witness} gives
\[
    d_H(p,C) \le \max_{q\in W(B)}d_H(q,C) + \frac{\eps R_G}{4} \le \cost_{P_\eps}(C) + \frac{\eps R_G}{4}.
\]
Taking the maximum over all $p \in P$, we obtain $\cost_P(C) \le \cost_{P_\eps}(C) + {\eps R_G}/{4}$.
By Lemma~\ref{lem:gonzalez}, $R_G \le 4\OPT_k(P)$, and therefore ${\eps R_G}/{4} \le \eps \OPT_k(P)$.
Moreover, $A_k\subseteq P_\eps$ and
\[
    \Phi_{P_\eps}(A_k) \le \Phi_P(A_k) = R_G.
\]
Therefore, Lemma~\ref{lem:transfer}, with $A=A_k$, $R=R_G$, and $\eta=\eps R_G/4$, shows that every optimal
continuous $k$-center solution for $P_\eps$ is a $(1+\eps)$-approximation for $P$.

We next bound the coreset size. There are at most $k \left\lceil{8}/{\eps}\right\rceil$ nonempty radial shells. Since $R_G \ge R_0$,
\[
    \gamma_G = \min\left\{1, e^{\eps R_G/16}-1 \right\} \ge \min\left\{1, \frac{\eps R_G}{16} \right\} \ge \eps \min\left\{1, \frac{R_0}{16} \right\} = c_0\eps.
\]
Thus $|W(B)| \le T_{k,D}(\gamma_G) \le k! \left({A}/{c_0\eps} \right)^{kD}$.
Adding the $k$ anchors gives
\[
    |P_\eps| \le k + k\left\lceil\frac{8}{\eps}\right\rceil k! \left( \frac{A}{c_0\eps} \right)^{kD} = \left(\frac{k}{\eps}\right)^{O(kD)}.
\]
Finally, every non-anchor input point belongs to exactly one radial shell. Therefore, $\sum_B|B| \le n$.
By Lemma~\ref{lem:shell-witness}, the total time for all shell certificates is
\[
    \sum_B O\bigl(|B|\,kD\,T_{k,D}(\gamma_G) \bigr) = O\bigl(nkD\,T_{k,D}(\gamma_G) \bigr) = O\left(nkD \left(\frac{k}{\eps}\right)^{O(kD)} \right).
\]
The computation of the farthest-first anchors, radial coordinates, and shell assignments is absorbed into this bound.
\end{proof}


\section{Proof of the Coreset-Size Lower Bound}\label{app:coreset-lower-bound}

This appendix proves Theorem~\ref{thm:coreset-lower-bound}. Throughout the proof, the target optimum radius is normalized to $1$. This is not a rescaling of the hyperbolic metric. It is simply the choice of a fixed constant-radius instance.

\subsection{A spherical packing}
We first record the standard packing estimate used to construct the witness directions.

\begin{lemma}[Separated directions]\label{lem:separated-directions}
Let $D \ge 2$ and let $0<\alpha \le 1$. There exists a set $U\subseteq\mathbb S^{D-1}$ such that $\angle(u,v) \ge \alpha$ for all distinct $u,v \in U$, and $|U| \ge c_D^{\mathrm{pack}}\alpha^{-(D-1)}$, where $c_D^{\mathrm{pack}}>0$ depends only on $D$.
\end{lemma}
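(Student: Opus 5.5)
We take a maximal $\alpha$-separated subset of $\mathbb S^{D-1}$ and lower-bound its size by a covering (volume) argument.

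First, let $U\subseteq\mathbb S^{D-1}$ be a maximal subset such that $\angle(u,v)\ge\alpha$ for all distinct $u,v\in U$. Such a set exists and is finite, because the sphere is compact and a separated set cannot be infinite. Maximality means every $\xi\in\mathbb S^{D-1}$ satisfies $\angle(\xi,u)<\alpha$ for some $u\in U$. Hence the geodesic caps $\{\xi:\angle(\xi,u)<\alpha\}$, $u\in U$, cover $\mathbb S^{D-1}$.

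Second, we compare surface measures. Let $\sigma$ denote the surface measure on $\mathbb S^{D-1}$. A geodesic cap of angular radius $\alpha$ has measure
\[
\omega_{D-2}\int_0^\alpha \sin^{D-2}\theta\,d\theta\le \omega_{D-2}\int_0^\alpha\theta^{D-2}\,d\theta=\frac{\omega_{D-2}}{D-1}\,\alpha^{D-1},
\]
where $\omega_{D-2}$ is the area of $\mathbb S^{D-2}$. For $D=2$, read $\omega_0=2$, so a cap is an arc of length $2\alpha$. The covering property then gives
\[
\sigma(\mathbb S^{D-1})\le |U|\cdot\frac{\omega_{D-2}}{D-1}\alpha^{D-1},
\]
and therefore $|U|\ge c_D^{\mathrm{pack}}\alpha^{-(D-1)}$ with
\[
c_D^{\mathrm{pack}}=\frac{(D-1)\,\sigma(\mathbb S^{D-1})}{\omega_{D-2}}.
\]
This constant depends only on $D$.

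The restriction $\alpha\le1$ is not needed for this bound. It only guarantees that $|U|$ is a meaningful count.

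The only step needing care is the cap-measure upper bound, and that is routine via $\sin\theta\le\theta$. A Euclidean-chord formulation would also work, since the chord length is $2\sin(\theta/2)\ge 2\theta/\pi$ for $\theta\le\pi$, but the angular version is simpler.
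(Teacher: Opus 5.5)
Your proof is correct and follows the paper's argument essentially verbatim. Both take a maximal $\alpha$-separated set, observe that the caps of angular radius $\alpha$ cover the sphere, bound the cap measure using $\sin t\le t$, and arrive at the same constant $c_D^{\mathrm{pack}}=(D-1)\omega_{D-1}/\omega_{D-2}$.
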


\begin{proof}
Let $U$ be a maximal $\alpha$-separated subset of $\mathbb S^{D-1}$. Maximality implies that the spherical caps of angular radius $\alpha$ centered at the points of $U$ cover $\mathbb S^{D-1}$.

Let $\omega_j$ denote the surface measure of the unit sphere $\mathbb S^j$. A spherical cap of angular radius $\alpha \le 1$ has measure at most
\[
    \omega_{D-2} \int_0^\alpha \sin^{D-2}t dt \le \omega_{D-2} \int_0^\alpha t^{D-2} dt = \frac{\omega_{D-2}}{D-1}\alpha^{D-1}.
\]
Since these caps cover $\mathbb S^{D-1}$,
\[
    \omega_{D-1} \le |U| \frac{\omega_{D-2}}{D-1}\alpha^{D-1}.
\]
Therefore
\[
    |U| \ge \frac{(D-1)\omega_{D-1}}{\omega_{D-2}} \alpha^{-(D-1)}.
\]
The claim follows by taking
\[
    c_D^{\mathrm{pack}} := \frac{(D-1)\omega_{D-1}}{\omega_{D-2}}.
\]
\end{proof}

For $D=1$, the direction sphere is $\mathbb S^0=\{-1,+1\}$. We use these two directions directly. This agrees
with the exponent $(D-1)/2=0$.

\subsection{A single unique-witness family}
Define the absolute constant
\[
    \Lambda := \frac{2\sinh(3/2)}{\sinh^2(1/2)}
\]
and put
\[
    \eps_0 := \min\left\{\frac{1}{4}, \frac{1}{4\Lambda} \right\}.
\]
For $0<\eps \le \eps_0$, define $\alpha := 2\sqrt{\Lambda\eps}$. Then $\alpha \le 1$.

For $D \ge 2$, let $U \subseteq \mathbb S^{D-1}$ be the $\alpha$-separated set from Lemma~\ref{lem:separated-directions}. For $D=1$, put $U=\mathbb S^0$.
In polar coordinates around an origin $o$, define
\[
    x_u := \left(\frac12,u\right), \qquad c_u := \left(\frac12+2\eps,-u\right) \qquad \text{for every }u\in U.
\]
Let $X:=\{x_u:u\in U\}$.

\begin{lemma}[Unique maximum witness]\label{lem:unique-maximum-witness}
For every $u\in U$, $\distH(x_u,c_u)=1+2\eps$, whereas $\distH(x_v,c_u) \le 1$ for every $v \in U \setminus\{u\}$. Consequently, $x_u$ is the unique farthest point of $X$ from $c_u$.
\end{lemma}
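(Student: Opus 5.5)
The plan is to compute both distances directly from the polar form of the hyperbolic law of cosines, Lemma~\ref{lem:polar-cosine-law}. Write $r=1/2$ and $s=1/2+2\eps$, and note that $c_u=(s,-u)$.

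\textbf{The case $v=u$.} The inner product of directions is $\langle u,-u\rangle=-1$, so the first form of the lemma gives
\[
\cosh\distH(x_u,c_u)=\cosh r\cosh s+\sinh r\sinh s=\cosh(r+s)=\cosh(1+2\eps).
\]
Since $\cosh$ is injective on $[0,\infty)$, this yields $\distH(x_u,c_u)=1+2\eps$. Geometrically, the two points lie on opposite rays of one geodesic through $o$.

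\textbf{The case $v\ne u$.} Now the inner product is $\langle u,-v\rangle=-\langle u,v\rangle$, and
\[
\cosh\distH(x_v,c_u)=\cosh(r+s)-\sinh r\sinh s\,\bigl(1-\langle u,v\rangle\bigr).
\]
This is the identity displayed in Section~\ref{sec:coreset-lower-bound}. The separation $\angle(u,v)\ge\alpha$ gives $1-\langle u,v\rangle=1-\cos\angle(u,v)\ge 1-\cos\alpha$. Since $\alpha\le1\le\pi$, we have $1-\cos\alpha\ge \alpha^2/2-\alpha^4/24\ge \alpha^2/4$. Substituting $\alpha^2=4\Lambda\eps$ gives $1-\langle u,v\rangle\ge\Lambda\eps$. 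Using $\sinh s\ge\sinh(1/2)$, we obtain
\[
\cosh\distH(x_v,c_u)\le\cosh(1+2\eps)-\sinh^2(1/2)\,\Lambda\eps=\cosh(1+2\eps)-2\eps\sinh(3/2).
\]
By the mean value theorem, $\cosh(1+2\eps)-\cosh 1\le 2\eps\sinh(1+2\eps)\le 2\eps\sinh(3/2)$, where the last step uses $\eps\le\eps_0\le1/4$. Hence $\cosh\distH(x_v,c_u)\le\cosh 1$, and so $\distH(x_v,c_u)\le1$.

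\textbf{The case $D=1$.} Here $U=\{-1,+1\}$, and for $v\ne u$ we have $\langle u,v\rangle=-1$. The bound $1-\langle u,v\rangle=2\ge\Lambda\eps$ holds because $\Lambda\eps\le1/4$, so the same computation applies.

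\textbf{Conclusion and main difficulty.} Since $1+2\eps>1$, $x_u$ is the unique farthest point of $X$ from $c_u$. The one delicate step is the constant bookkeeping: $\Lambda$ is defined exactly so that the angular deficit $\sinh^2(1/2)\Lambda\eps$ absorbs the mean-value increment $2\eps\sinh(3/2)$. Everything else is direct substitution.
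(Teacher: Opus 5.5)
Your proof is correct and follows essentially the same argument as the paper. You use the law of cosines in the form $\cosh(r+s)-\sinh r\sinh s\,(1-\cos\theta)$, the bounds $1-\cos\alpha\ge\alpha^2/4=\Lambda\eps$ and $\sinh r\sinh s\ge\sinh^2(1/2)$, and the mean-value estimate absorbed by the choice of $\Lambda$; the only cosmetic difference is that for $D=1$ you reuse the general inequality via $1-\langle u,v\rangle=2\ge\Lambda\eps$, whereas the paper computes the same-ray distance $2\eps$ directly.
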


\begin{proof}
Put $r :=1/2$, $s :=1/2 + 2\eps$. The directions of $x_u$ and $c_u$ are opposite. They therefore lie on one geodesic through $o$, on opposite sides of $o$, and hence $\distH(x_u,c_u)=r+s=1+2\eps$.

Now fix $v\ne u$ and put $\theta:=\angle(u,v)$. The angle between the directions $v$ and $-u$ is $\pi-\theta$.
The hyperbolic law of cosines gives
\[
\begin{aligned}
    \cosh\distH(x_v,c_u)
    &= \cosh r\cosh s - \sinh r\sinh s\cos(\pi-\theta) \\
    &= \cosh r\cosh s + \sinh r\sinh s\cos\theta \\
    &= \cosh(r+s) - \sinh r\sinh s(1-\cos\theta) \\
    &= \cosh(1+2\eps) - \sinh(1/2)\sinh(1/2+2\eps)(1-\cos\theta).
\end{aligned}
\]

For $D \ge 2$, the separation of $U$ gives $\theta \ge \alpha$. Since $0<\alpha \le 1$,
\[
    1-\cos\theta \ge 1-\cos\alpha \ge \frac{\alpha^2}{4} = \Lambda\eps.
\]
Also, $\sinh(1/2)\sinh(1/2+2\eps) \ge \sinh^2(1/2)$.
By the mean-value theorem and $\eps\le1/4$,
\[
    \cosh(1+2\eps)-\cosh 1 \le 2\eps\sinh(3/2) = \Lambda\eps\sinh^2(1/2).
\]
It follows that
\[
    \cosh\distH(x_v,c_u) \le \cosh(1+2\eps) - \Lambda\eps\sinh^2(1/2) \le \cosh 1.
\]
Since $\cosh$ is increasing on $[0,\infty)$, $\distH(x_v,c_u) \le 1$.

For $D=1$, the only direction distinct from $u$ is $v=-u$. Then $x_v$ and $c_u$ lie on the same ray, and therefore
\[
    \distH(x_v,c_u) = \left| \left(\frac12+2\eps\right)-\frac12 \right| = 2\eps \le1.
\]
This completes the proof.
\end{proof}

The packing lemma implies that $|X| = |U| \ge c_D'\eps^{-(D-1)/2}$ for a constant $c_D'>0$ depending only on $D$.
For $D=1$, this holds with $|X|=2$.

Notice also that the origin covers the entire witness family with radius $1/2$:
\[
    \max_{x\in X}\distH(x,o)=\frac12.
\]

\subsection{Separated copies and the anchor pair}
Fix $k \ge 2$. Choose points $o_0,o_1,\ldots,o_{k-1} \in \Hd$ on a common geodesic such that $\distH(o_i,o_j) \ge 20$ for all distinct $i,j$. Choose two anchor points $b^-,b^+$ on the same geodesic such that $o_0$ is
their midpoint and $\distH(b^-,o_0) = \distH(b^+,o_0) = 1$. In particular, $\distH(b^-,b^+)=2$.

For each $i \in \{1,\ldots,k-1\}$, take an isometric copy of the witness family $X$ centered at $o_i$. Denote it by $X_i = \{x_{i,u}:u\in U\}$, where $\distH(o_i,x_{i,u})=1/2$. Let $c_{i,u}$ be the corresponding isometric copy of the query center $c_u$. Lemma~\ref{lem:unique-maximum-witness} gives $\distH(x_{i,u},c_{i,u})=1+2\eps$ and $\distH(x_{i,v},c_{i,u}) \le 1$ for every $v \ne u$.

Define the complete input by
\[
    P := \{b^-,b^+\} \cup \bigcup_{i=1}^{k-1}X_i.
\]

\begin{lemma}[Optimum radius of the full instance]\label{lem:lower-bound-optimum}
The constructed instance satisfies
\[
    \OPT_k(P)=1.
\]
\end{lemma}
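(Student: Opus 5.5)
We prove $\OPT_k(P)\le 1$ by exhibiting a feasible solution and $\OPT_k(P)\ge 1$ by a counting and pigeonhole argument on the $k$ far-apart groups.

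\textbf{Upper bound.} Take $C_0:=\{o_0,o_1,\ldots,o_{k-1}\}$, which has exactly $k$ centers. Since $\distH(b^\pm,o_0)=1$, both anchors are within distance $1$ of $C_0$. For every $i\ge1$ and $u\in U$ we have $\distH(x_{i,u},o_i)=1/2$. Hence $\cost_P(C_0)\le 1$.

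\textbf{Lower bound.} Suppose, for contradiction, that some $C\subseteq\Hd$ with $|C|\le k$ has $\cost_P(C)<1$. Partition $P$ into $k$ groups: $G_0=\{b^-,b^+\}$ and $G_i=X_i$ for $1\le i\le k-1$. Each group is nonempty.

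These groups are pairwise far apart. Every point of $G_i$ lies within distance $1$ of $o_i$, and $\distH(o_i,o_j)\ge 20$ for $i\ne j$. So any two points in different groups are at distance at least $18$.

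No center can serve two groups. If a center $c$ were within distance $<1$ of points $p\in G_i$ and $q\in G_j$ with $i\ne j$, the triangle inequality would give $\distH(p,q)<2$, contradicting the separation of $18$.

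Each group needs at least one center, since every point must be within distance $<1$ of some center. There are $k$ groups and at most $k$ centers, and no center serves two groups. Therefore each group, in particular $G_0$, is served by exactly one center $c$. That center satisfies $\distH(c,b^-)<1$ and $\distH(c,b^+)<1$, so $\distH(b^-,b^+)<2$. This contradicts $\distH(b^-,b^+)=2$.

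Hence every $C$ with $|C|\le k$ has $\cost_P(C)\ge 1$. The minimum is attained because $\Hd$ is proper, so $\OPT_k(P)\ge 1$. Combined with the upper bound, $\OPT_k(P)=1$.
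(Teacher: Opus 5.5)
Your proof is correct and matches the paper's: the upper bound uses the same explicit solution $\{o_0,o_1,\ldots,o_{k-1}\}$, and the lower bound uses the same pigeonhole argument, which the paper phrases by choosing one point $y_i\in X_i$ per family to obtain $k+1$ points $\{b^-,b^+,y_1,\ldots,y_{k-1}\}$ with pairwise distances at least $2$, so no ball of radius $<1$ contains two of them. Your appeal to properness is not needed for the lower bound, since showing that every feasible $C$ has cost at least $1$ already bounds the minimum from below.
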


\begin{proof}
For the upper bound, use the center set $C^* := \{o_0,o_1,\ldots,o_{k-1}\}$. The center $o_0$ covers the two anchors with radius $1$, and every $o_i$, $i \ge 1$, covers $X_i$ with radius $1/2$.  Therefore $\Phi_P(C^*) \le 1$.

For the reverse inequality, choose an arbitrary point $y_i \in X_i$ for each $i \in \{1,\ldots,k-1\}$.
The set $Y := \{b^-,b^+,y_1,\ldots,y_{k-1}\}$ contains $k+1$ points. The anchor points have distance exactly $2$, while all other pairs have distance greater than $2$ because the base points $o_0,\ldots,o_{k-1}$ are separated by distance at least $20$ and each $y_i$ lies within distance $1/2$ of $o_i$.

A hyperbolic ball of radius strictly smaller than $1$ has diameter strictly smaller than $2$. Such a ball can therefore contain at most one point of $Y$. Consequently, $k$ balls of radius strictly smaller than $1$ cannot cover the $k+1$ points of $Y$. Hence
\[
    \OPT_k(P) \ge 1.
\]
Together with the upper bound, this proves the claim.
\end{proof}

\subsection{What every $\eps$-coreset must retain}

We next show that an $\eps$-coreset must retain both anchors and must contain at least one point from every witness family.

\begin{lemma}[Mandatory regions]\label{lem:mandatory-regions}
Let $Q \subseteq P$ be an $\eps$-coreset for $P$. Then $b^-,b^+\in Q$ and $Q \cap X_i \ne \emptyset$ for every $i \in \{1,\ldots,k-1\}$.
\end{lemma}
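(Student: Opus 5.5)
We argue by contradiction in each case: if $Q$ omits one of the mandatory pieces, we exhibit an exact optimal continuous $k$-center solution for $Q$ whose cost on $P$ exceeds $(1+\eps)\OPT_k(P)=1+\eps$, using Lemma~\ref{lem:lower-bound-optimum}. In every case the key step is to show that the exhibited center set has cost on $Q$ equal to $\OPT_k(Q)$. We do this by showing that its cost on $Q$ is at most some value $\lambda$, and that $Q$ contains $k+1$ points pairwise at distance at least $2\lambda$. Then any $k$ balls of radius $<\lambda$ miss one of them, so $\OPT_k(Q)\ge\lambda$.

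\emph{Case 1: $Q\cap X_i=\emptyset$ for some $i$.} Here $Q$ lies in $\{b^-,b^+\}\cup\bigcup_{j\ne i}X_j$, which is covered by the $k-1$ centers $o_0$ and $o_j$ ($j\ne i,\ j\ge1$) with radius at most $1$. Write $Q'=Q\cap\bigcup_{j\ne i}X_j$.
\begin{itemize}
\item If both anchors lie in $Q$, the $k-1$ centers have cost $1$ on $Q$. We use the $k$-th center to place a far point $z$ with $\distH(z,X_i)\ge 10$. The set $C=\{o_0\}\cup\{o_j\}_{j\ne i}\cup\{z\}$ has cost $\le1$ on $Q$.
\item If at most one anchor lies in $Q$, we simply place centers at that anchor and at the far point $z$, keeping the $o_j$.
\end{itemize}
In all subcases $C$ has cost at most $1$ on $Q$ but distance $\ge 9>1+\eps$ to $X_i\subseteq P$.

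Optimality of $C$ requires $\OPT_k(Q)\ge\Phi_Q(C)$, which may fail when $\OPT_k(Q)<1$, since optimality means exact equality. So we instead choose $C$ to be a genuine optimum of $Q$ that is still far from $X_i$. $Q$ is contained in a region at distance $\ge 15$ from $X_i$, namely near $o_0$ and the $o_j$, $j\ne i$. With $k$ centers, any optimal solution for $Q$ uses at most $k-1$ "useful" centers, because $Q$ is covered by $k-1$ balls of radius $\le1$. Hence some optimal solution can have one center moved arbitrarily, and we move that center far from $X_i$. Concretely, we take any optimal $C'$ for $Q$. For each center of $C'$ within distance $5$ of $X_i$, that center covers no point of $Q$ within its radius $\OPT_k(Q)\le1$, so it can be relocated to $z$ without changing $\Phi_Q$. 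The result is optimal for $Q$ but has cost $>1+\eps$ on $P$.

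\emph{Case 2: $b^+\notin Q$ (symmetrically $b^-$).} We take an optimal $C'$ for $Q$ and perform the same relocation for centers near $b^+$. Since $\OPT_k(Q)\le1$ and $b^+$ is at distance $\ge 18$ from all other points except $b^-$, a center within distance $1+\eps$ of $b^+$ covers at most $b^-$. We relocate that center so as to keep $b^-$ covered but be far from $b^+$: we move it toward $b^-$, to the reflection of its position through the hyperplane bisecting $b^-b^+$, or, more simply, to $b^-$ itself. Placing the center at $b^-$ gives $\distH(b^-,\cdot)=0$, which does not increase the cost on $Q$, while the distance to $b^+$ is $2>1+\eps$. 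Any other center within distance $1+\eps$ of $b^+$ is relocated to the far point $z$, as in Case 1.

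The main obstacle is precisely the optimality requirement: the coreset definition quantifies over \emph{exact} minimizers of $\Phi_Q$, so the exhibited solution must be optimal rather than merely feasible with cost $\le1$. The relocation argument handles this. Starting from an arbitrary optimizer and moving only centers that serve nothing, or only $b^-$, preserves $\Phi_Q$ exactly. We must also check that the relocated centers do not end up near other omitted points; the separation $\ge 20$ and the choice of a far $z$ ensure this.
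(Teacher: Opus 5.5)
Your final relocation argument is correct and is essentially the paper's. In both, you start from an arbitrary exact optimum for $Q$ and change it only in ways that cannot increase any point's distance to its nearest center, so it stays optimal while the omitted point ends up at distance greater than $1+\eps$. The paper simply deletes centers that serve no point of $Q$ and uses the sharper bound $\OPT_k(Q)\le 1/2$, which makes your move of a center onto $b^-$ in Case~2 unnecessary. The opening packing strategy, the explicit center sets in Case~1, and the claim that an optimum uses at most $k-1$ useful centers play no role in the final argument and should be dropped; that last claim is false, e.g.\ when $|Q|=k$ and $\OPT_k(Q)=0$.
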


\begin{proof}
Suppose first that $b^-\notin Q$. The set $Q$ can be covered with radius at most $1/2$ using at most $k$ centers. Indeed, place a center at $b^+$ if $b^+ \in Q$, and place a center at $o_i$ for every nonempty set $Q \cap X_i$. This uses at most $1+(k-1)=k$ centers. Therefore $\OPT_k(Q) \le 1/2$.

Let $\widehat C$ be an exact optimal center set for $Q$. Remove from $\widehat C$ every center that is not nearest to any point of $Q$. This does not change the cost on $Q$. Every remaining center $\widehat c \in \widehat C$ is within distance $1/2$ of some point $q \in Q$.

If $q=b^+$, then
\[
    \distH(b^-,\widehat c) \ge \distH(b^-,b^+)-\distH(b^+,\widehat c) \ge 2-\frac12 = \frac32.
\]
If $q \in X_i$, then the separation of the construction gives
\[
    \distH(b^-,\widehat c) \ge \distH(b^-,o_i) - \distH(o_i,q) - \distH(q,\widehat c) \ge 19-\frac12-\frac12 = 18.
\]
Consequently, $\distH(b^-,\widehat C) \ge 3/2 > 1+\eps$, where we used $\eps \le 1/4$. Since $\OPT_k(P)=1$, the exact optimum $\widehat C$ for $Q$ is not a $(1+\eps)$-approximation for $P$. This contradicts the assumption that $Q$ is an $\eps$-coreset. Hence $b^-\in Q$. The proof that $b^+\in Q$ is symmetric.

Now suppose that $Q\cap X_i=\emptyset$ for some $i$. Place centers at $b^-$ and $b^+$ whenever these anchors belong to $Q$, and place a center at $o_j$ for every nonempty $Q \cap X_j$ with $j \ne i$. This uses at most $2+(k-2)=k$ centers and covers $Q$ with radius at most $1/2$. Thus $\OPT_k(Q) \le 1/2$.

As above, choose an exact optimum $\widehat C$ for $Q$ and remove all unused centers. Every remaining center lies within distance $1/2$ of a point in one of the represented regions. Choose any $x \in X_i$. The separation between $o_i$ and every represented region gives $\distH(x,\widehat C)>1+\eps$. More explicitly, if a remaining center is associated with a point $q \in X_j$, where $j \ne i$, then
\[
\begin{aligned}
    \distH(x,\widehat c)
    &\ge \distH(o_i,o_j) - \distH(o_i,x) - \distH(o_j,q) - \distH(q,\widehat c) \\
    &\ge 20-\frac12-\frac12-\frac12 = \frac{37}{2}.
\end{aligned}
\]
If it is associated with one of the anchors, the analogous lower bound is at least $18$. Hence $\distH(x,\widehat C)>1+\eps$. Again, this contradicts the definition of an $\eps$-coreset. Therefore $Q\cap X_i \ne \emptyset$ for every $i$.
\end{proof}

\subsection{Forcing every witness point}
We now prove that intersecting every family is not enough. An $\eps$-coreset must contain every point of every witness family.

\begin{lemma}[Every witness is necessary]\label{lem:every-witness-necessary}
Let $Q \subseteq P$ be an $\eps$-coreset. Then $X_i \subseteq Q$ for every $i \in \{1,\ldots,k-1\}$.
\end{lemma}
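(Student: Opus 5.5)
We argue by contradiction. Suppose some witness $x_{i,u}\in X_i$ is missing from $Q$. By Lemma~\ref{lem:mandatory-regions}, $Q$ still contains $b^-$ and $b^+$ and meets every $X_j$. We exhibit an exact optimal continuous $k$-center solution for $Q$ whose cost on $P$ is at least $1+2\eps$. Since $\OPT_k(P)=1$ by Lemma~\ref{lem:lower-bound-optimum}, such a solution is not a $(1+\eps)$-approximation, which contradicts the assumption that $Q$ is an $\eps$-coreset.

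\textbf{The candidate solution.} Take
\[
    \widehat C := \{m,\ c_{i,u}\} \cup \{o_j : j\in\{1,\ldots,k-1\}\setminus\{i\}\},
\]
where $m=o_0$ is the midpoint of $b^-,b^+$. This set has exactly $k$ centers.

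\textbf{Cost of $\widehat C$ on $Q$.} We check that $\cost_Q(\widehat C)\le 1$.
\begin{itemize}
\item The center $m$ covers $b^\pm$ at distance exactly $1$.
\item Each $o_j$ with $j\ne i$ covers $X_j$ at radius $1/2$.
\item For the points of $X_i\cap Q$, every $v\ne u$ satisfies $\distH(x_{i,v},c_{i,u})\le 1$ by Lemma~\ref{lem:unique-maximum-witness}, transported by the isometry to the copy at $o_i$.
\end{itemize}

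\textbf{Optimality of $\widehat C$ for $Q$.} We need $\OPT_k(Q)\ge 1$, so that $\widehat C$ is exactly optimal. Reuse the argument of Lemma~\ref{lem:lower-bound-optimum}. Since $Q$ meets every $X_j$, pick $y_j\in Q\cap X_j$ for each $j$, and form $Y=\{b^-,b^+,y_1,\ldots,y_{k-1}\}\subseteq Q$. This set has $k+1$ points with pairwise distances at least $2$. Any ball of radius $<1$ has diameter $<2$, so it contains at most one point of $Y$, and $k$ such balls cannot cover $Y$. Hence $\OPT_k(Q)\ge 1$, and together with the previous step $\cost_Q(\widehat C)=1=\OPT_k(Q)$.

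\textbf{Failure on $P$.} The omitted point is far from $\widehat C$:
\[
    \distH(x_{i,u},\widehat C)\ \ge\ \distH(x_{i,u},c_{i,u})=1+2\eps,
\]
because every other center is at distance at least $18$ from $x_{i,u}$ by the separation of the $o_j$'s, as in the computation in Lemma~\ref{lem:mandatory-regions}. Therefore
\[
    \cost_P(\widehat C)\ge 1+2\eps>(1+\eps)\OPT_k(P),
\]
which gives the contradiction.

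\textbf{Main obstacle.} The delicate point is making sure $\widehat C$ is genuinely an exact minimizer for $Q$, since the coreset definition quantifies over every optimal solution. The lower bound $\OPT_k(Q)\ge 1$ depends on $Q$ containing both anchors and meeting each family, which is exactly what Lemma~\ref{lem:mandatory-regions} provides. The remaining check is that the copy of $c_u$ covers $X_i\setminus\{x_{i,u}\}$ within $1$; this follows from isometry invariance. For $D=1$ the bound is $2\eps\le 1$, which also holds.
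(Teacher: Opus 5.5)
Your proof is correct and follows essentially the same route as the paper. You use Lemma~\ref{lem:mandatory-regions} to obtain $k+1$ pairwise $2$-separated points in $Q$, which gives $\OPT_k(Q)\ge 1$. You then exhibit $\{o_0,c_{i,u}\}\cup\{o_j: j\ne i\}$ as an exact optimum for $Q$ whose distance to the omitted witness $x_{i,u}$ is exactly $1+2\eps$, since every other center is far away.
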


\begin{proof}
Suppose, toward a contradiction, that $x_{i,u}\notin Q$ for some $i$ and $u \in U$.

By Lemma~\ref{lem:mandatory-regions}, the set $Q$ contains both anchors and contains at least one point from every $X_j$. Select $q_j \in Q \cap X_j$ for $j=1,\ldots,k-1$. Then $\{b^-,b^+,q_1,\ldots,q_{k-1}\}$ is a set of $k+1$ points whose pairwise distances are at least $2$. Therefore no $k$ hyperbolic balls of radius strictly smaller than $1$ can
cover this set. It follows that $\OPT_k(Q) \ge 1$.

Consider the center set
\[
    C_{i,u} := \{o_0,c_{i,u}\} \cup \{o_j:1\le j\le k-1,\ j\ne i\}.
\]
This set contains exactly $k$ centers. The center $o_0$ covers the anchors with radius $1$. Every $o_j$, $j \ne i$, covers $X_j$ with radius $1/2$. Finally, Lemma~\ref{lem:unique-maximum-witness} shows that $c_{i,u}$ covers $X_i\setminus\{x_{i,u}\}$ with radius at most $1$. Since $x_{i,u} \notin Q$, we obtain $\Phi_Q(C_{i,u}) \le 1$.
Combined with $\OPT_k(Q) \ge 1$, this gives $\Phi_Q(C_{i,u}) = \OPT_k(Q) = 1$. Thus $C_{i,u}$ is an exact optimal continuous $k$-center solution for $Q$.

On the omitted point, $\distH(x_{i,u},c_{i,u})=1+2\eps$. Every center of $C_{i,u}$ other than $c_{i,u}$ is one of the points $o_j$ with $j \ne i$. By the separation of the base points,
\[
    d_H(x_{i,u},o_j) \ge d_H(o_i,o_j)-d_H(o_i,x_{i,u}) \ge 20-\frac12 > 1+2\eps.
\]
Together with $d_H(x_{i,u},c_{i,u})=1+2\eps$, this gives $d_H(x_{i,u},C_{i,u})=1+2\eps$.
Since $\OPT_k(P)=1$,
\[
    \Phi_P(C_{i,u}) \ge 1+2\eps > (1+\eps)\OPT_k(P).
\]
We have therefore found an exact optimum for $Q$ that is not a $(1+\eps)$-approximation for $P$, contradicting the definition of an $\eps$-coreset. Thus every $x_{i,u}$ must belong to $Q$.
\end{proof}

\begin{proof}[Proof of Theorem~\ref{thm:coreset-lower-bound}]
By Lemma~\ref{lem:every-witness-necessary}, every $\eps$-coreset $P_\eps$ contains all $k-1$ witness families.  Therefore
\[
    |P_\eps| \ge (k-1)|U| \ge (k-1)c_D'\eps^{-(D-1)/2}.
\]
Since $k \ge 2$, $k-1 \ge {k}/{2}$.
Consequently,
\[
    |P_\eps| \ge \frac{c_D'}{2} k\eps^{-(D-1)/2}.
\]
Taking $c_D:={c_D'}/{2}$ proves the claimed lower bound.

Finally,
\[
    |P| = 2+(k-1)|U| = \Omega\!\left(k\eps^{-(D-1)/2} \right).
\]
In fact, Lemmas~\ref{lem:mandatory-regions} and \ref{lem:every-witness-necessary} show that every $\eps$-coreset of this instance must equal the whole input set $P$.
\end{proof}

\begin{remark}
For $D \ge 2$, the restriction $k \ge 2$ is necessary for any lower bound that diverges as $\varepsilon \to 0$. Since $\mathbb H^D$ is a complete $CAT(0)$ space, every finite point set has a unique circumcenter \cite[Proposition II.2.7]{bridson2013metric}. Let $r^\star$ denote the minimum enclosing radius of $P$. A point $c \in \mathbb H^D$ is a feasible center at radius $r$ if and only if
\[
    c \in \bigcap_{p\in P} B_{\mathbb H}(p,r).
\]
Closed hyperbolic balls are compact and geodesically convex. Suppose first that $r^\star>0$, and choose a sequence $r_j \uparrow r^\star$ with $r_j<r^\star$. For every $j$,
\[
    \bigcap_{p\in P} B_{\mathbb H}(p,r_j)=\varnothing.
\]
By the contrapositive of the Helly theorem for $D$-dimensional Cartan--Hadamard manifolds \cite[Corollary 4.3]{ledyaev2006helly}, there is a subset $Q_j \subseteq P$ with $\lvert Q_j\rvert\le D+1$ such that
\[
    \bigcap_{q\in Q_j} B_{\mathbb H}(q,r_j)=\varnothing.
\]
Because $P$ is finite, some fixed subset $Q \subseteq P$ occurs as $Q_j$ for infinitely many $j$. It follows that the minimum enclosing radius of $Q$ is at least $r^\star$. The reverse inequality follows from $Q \subseteq P$, so $Q$ and $P$ have the same minimum enclosing radius. Moreover, the circumcenter of $P$ is feasible for $Q$ at radius $r^\star$; by uniqueness of the circumcenter of $Q$, the two circumcenters coincide. Hence $Q$ is an exact coreset for continuous $1$-center and satisfies $\lvert Q\rvert\le D+1$. If $r^\star=0$, a single input point is an exact coreset. Consequently, an $\varepsilon$-dependent lower bound of order $\varepsilon^{-(D-1)/2}$ cannot hold when $k=1$. For $D=1$, the exponent $(D-1)/2$ is zero, so the theorem gives only a constant lower bound in $\varepsilon$.
\end{remark}

\end{document}